\newcommand{\remove}[1]{}
\newcommand{\michal}[1]{\todo[color=blue!40, author=Michal]{#1}}
\newtheorem{theorem}{Theorem}[section]
\newtheorem{lemma}[theorem]{Lemma}
\newtheorem{meta-theorem}[theorem]{Meta-Theorem}
\newtheorem{claim}[theorem]{Claim}
\newtheorem{remark}[theorem]{Remark}
\newtheorem{observation}[theorem]{Observation}
\definecolor{darkgreen}{rgb}{0,0.5,0}
\definecolor{darkblue}{rgb}{0,0,0.6}
\Crefname{remark}{Remark}{Remarks}
\Crefname{observation}{Observation}{Observations}
\algnewcommand\algorithmicswitch{\textbf{switch}}
\algnewcommand\algorithmiccase{\textbf{case}}
\newcommand{\eps}{\varepsilon}
\newcommand{\CONGEST}{$\mathsf{CONGEST}$\xspace}
\newcommand{\LOCAL}{$\mathsf{LOCAL}$\xspace}
\newcommand{\opt}{\mathrm{OPT}}
\newcommand{\mds}{\mathrm{MDS}}
\newcommand{\mvc}{\mathrm{MVC}}
\newcommand{\mfvc}{\mathrm{MFVC}}
\newcommand{\hnode}{\textsc{heavy}}
\newcommand{\lnode}{\textsc{light}}
\DeclareMathOperator{\E}{\mathbb{E}}
\newcommand{\FullOrShort}{full}
  \newcommand{\fullOnly}[1]{#1}
  \newcommand{\shortOnly}[1]{}
    \newcommand{\fullOnly}[1]{}
    \newcommand{\IncludePictures}[1]{}
\begin{document}
\date{}
\title{Near-Optimal Distributed Dominating Set \\ in Bounded Arboricity Graphs}
\author{Michal Dory \\ \small{michal.dory@inf.ethz.ch} \\ \small{ETH Zurich} \and Mohsen Ghaffari \\ \small{ghaffari@inf.ethz.ch} \\ \small{ETH Zurich} \and Saeed Ilchi \\ \small{saeed.ilchi@inf.ethz.ch} \\ \small{ETH Zurich}}

\maketitle

\begin{abstract}
We describe a simple deterministic $O( \eps^{-1} \log \Delta)$ round distributed algorithm for $(2\alpha+1)(1 + \eps)$ approximation of minimum weighted dominating set on graphs with arboricity at most $\alpha$. Here $\Delta$ denotes the maximum degree. We also show a lower bound proving that this round complexity is nearly optimal even for the unweighted case, via a reduction from the celebrated KMW lower bound on distributed vertex cover approximation [Kuhn, Moscibroda, and Wattenhofer JACM'16].

Our algorithm improves on all the previous results (that work only for unweighted graphs) including a randomized $O(\alpha^2)$ approximation in $O(\log n)$ rounds [Lenzen and Wattenhofer DISC'10], a deterministic $O(\alpha \log \Delta)$ approximation in $O(\log \Delta)$ rounds [Lenzen and Wattenhofer DISC'10], a deterministic $O(\alpha)$ approximation in $O(\log^2 \Delta)$ rounds [implicit in Bansal and Umboh IPL'17 and Kuhn, Moscibroda, and Wattenhofer SODA'06], and a randomized $O(\alpha)$ approximation in $O(\alpha\log n)$ rounds [Morgan, Solomon and Wein DISC'21].

%We also provide a randomized $O(t \log\Delta)$ round distributed algorithm for any $t \leq \alpha \log^{-1}\alpha$ with expected approximation factor $\alpha + O(\alpha t^{-1})$.
We also provide a randomized $O(\alpha \log\Delta)$ round distributed algorithm that sharpens the approximation factor to $\alpha(1+o(1))$. 
If each node is restricted to do polynomial-time computations, our approximation factor is tight in the first order as it is NP-hard to achieve $\alpha - 1 - \eps$ approximation [Bansal and Umboh IPL'17]. 

\end{abstract}

\bibliographystyle{alpha}

\section{Introduction}

The minimum dominating set (MDS) problem is a classic and central problem in graph algorithms. In this problem, the goal is to construct a minimum weight set of nodes $S$ such that each node is either in $S$ or has a neighbor in $S$. The MDS problem has been widely studied both in the classic centralized setting and in the distributed setting, and it has various applications, for example, clustering and routing in ad-hoc networks. It is well-known that a simple greedy algorithm gives $\ln{(\Delta+1)}$ approximation for the problem for graphs with maximum degree $\Delta$ \cite{johnson1974approximation}, and that it is NP-hard to obtain a $c \ln{\Delta}$-approximation for a suitable constant $c$ \cite{dinur2014analytical}. A similar approximation can be also obtained by efficient distributed algorithms \cite{jia2002efficient,kuhn2005constant,nearsighted,deurer2019deterministic}. In particular, Kuhn, Moscibroda, and Wattenhofer showed a randomized $O(\log{\Delta})$-approximation algorithm that takes $O(\log^2{\Delta})$ rounds in the \CONGEST model, where messages are restricted to $O(\log{n})$ bits, or $O(\log{n})$ rounds in the \LOCAL model, where the message size is unbounded \cite{nearsighted}. There is also a deterministic poly-logarithmic $O(\log{\Delta})$-approximation algorithm for the problem in \CONGEST that is obtained by combining the algorithm of Deurer, Kuhn, and Maus \cite{deurer2019deterministic} with the recent deterministic network decomposition of Rozho\v{n} and Ghaffari \cite{rozhonghaffari20}. If one allows unbounded messages and exponential local computation, one can obtain even $(1+\eps)$-approximation for the problem in poly-logarithmic time in the \LOCAL model using the algorithm of Ghaffari, Kuhn, and Maus \cite{ghaffari2017complexity}.
On the lower bound side, Kuhn, Moscibroda, and Wattenhofer showed that one needs $\Omega(\log{\Delta}/\log{\log{\Delta}})$ rounds or $\Omega(\sqrt{\log{n}/\log{\log{n}}})$ rounds (the minimum of these two lower bounds applies) to get a logarithmic approximation \cite{KMW}.

Since it is NP-hard to obtain better than logarithmic approximation in general graphs, a large body of research focused on finding algorithms with a better approximation in special graph families, such as planar graphs, graphs of bounded expansion, and more (see, e.g., \cite{lenzen2013distributed,akhoondian2018distributed,amiri2019distributed,DBLP:conf/wdag/BonamyCGW21,czygrinow2008fast,lenzen2008can,wawrzyniak2014strengthened}).
One prominent example is the class of \emph{bounded arboricity} graphs, which informally speaking are graphs that are sparse everywhere. The arboricity $\alpha$ of a graph is the minimum number of forests into which its edges can be partitioned. The class of bounded arboricity graphs includes many important graph classes such as planar graphs, graphs of bounded treewidth or genus, and graphs excluding a fixed minor.
Many real-world graphs are sparse and believed to have low arboricity, for example, the World Wide Web graph and graphs representing social networks. This led to extensive study of graph problems in low-arboricity graphs (see, e.g., \cite{goel2006bounded,lenzen2010minimum,bansal2017tight,DBLP:conf/wdag/MorganSW21,eden2020testing,gianinazzi2021parallel,DBLP:conf/wdag/CambusCMU21}).

\subsection{MDS in Bounded Arboricity Graphs}

Lenzen and Wattenhofer showed the first algorithms for MDS in bounded arboricity graphs \cite{lenzen2010minimum}. In particular, they showed a randomized $O(\alpha^2)$-approximation algorithm that takes $O(\log{n})$ time, and a deterministic $O(\alpha \log{\Delta})$-approximation algorithm that takes $O(\log{\Delta})$ time. An $O(\alpha^2)$-approximation can be also obtained deterministically in $O(\log{n})$ time as was shown recently by Amiri \cite{DBLP:journals/corr/abs-2102-08076}. All these algorithms work in the \CONGEST model.
A recent line of work shows $O(\alpha)$-approximation algorithms for the problem. First, a centralized algorithm of Bansal and Umboh gives $(2\alpha+1)$-approximation for the problem \cite{bansal2017tight}.\footnote{The paper claimed a $3\alpha$-approximation, but optimizing the parameters of the algorithm gives a $(2\alpha+1)$-approximation, as was observed by Dvo{\v{r}}{\'a}k \cite{dvovrak2019distance}.} They also show that it is NP-hard to obtain an $(\alpha-1-\epsilon)$-approximation for the problem. The algorithm of Bansal and Umboh is based on LP-rounding, and can be implemented efficiently in the \CONGEST model using an algorithm for approximating the LP. This leads to a deterministic $O(\log^2{\Delta}/\eps^4)$-round $(2\alpha+1)(1+\epsilon)$-approximation algorithm using the $(1+\epsilon)$-approximation algorithm of Kuhn, Moscibroda, and Wattenhofer for approximating the LP \cite{nearsighted}.
A recent combinatorial algorithm for MDS in bounded arboricity graphs was shown by Morgan, Solomon and Wein \cite{DBLP:conf/wdag/MorganSW21}; they obtain a randomized $O(\alpha \log{n})$-round $O(\alpha)$-approximation algorithm for the problem in \CONGEST. All the above algorithms solve the unweighted version of the problem. Lastly, a very recent centralized algorithm of Sun \cite{sun2021improved} gives an $(\alpha+1)$-approximation for the weighted version of the problem. This algorithm however seems inherently sequential and does not seem to translate to an efficient distributed algorithm (see \cref{sec:our_techniques} for a more detailed discussion).

\subsection{Our Contribution}

Our first contribution is showing that $O(\alpha)$-approximation for weighted MDS can be obtained in just $O(\log{\Delta})$ rounds in the \CONGEST model. In particular, we show a simple deterministic algorithm that gives the following. For this and all the other algorithms in this paper, we assume that both $\Delta$ and $\alpha$ are known to all nodes. Please see~\cref{remark:unknown-delta} and~\cref{remark:unknown-alpha} for a discussion on the setting where $\Delta$ and $\alpha$ are unknown.

\begin{restatable}{theorem}{ThmDetMDS}\label{thm:det-ds}
For any $0 < \eps < 1$, there is a deterministic $(2\alpha + 1)(1 + \eps)$-approximation algorithm for the minimum weighted dominating set problem in graphs with arboricity at most $\alpha$. The algorithm runs in $O\left(\frac{\log (\Delta/\alpha)}{\eps}\right)$ rounds in the \CONGEST model.
\end{restatable}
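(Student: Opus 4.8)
The plan is to run a distributed primal--dual (equivalently, local-ratio / LP-rounding) scheme and split the loss exactly the way the ratio $(2\alpha+1)(1+\eps)$ suggests: the $(1+\eps)$ comes from the continuous relaxation, and the $(2\alpha+1)$ is the unavoidable arboricity loss. I would start from the standard LP relaxation, $\min\sum_v w_v x_v$ subject to $\sum_{u\in N[v]}x_u\ge 1$ for all $v$ and $x\ge 0$, together with its packing dual, $\max\sum_v y_v$ subject to $\sum_{v\in N[u]}y_v\le w_u$ for all $u$ and $y\ge 0$. By weak duality every feasible dual $y$ satisfies $\sum_v y_v\le\opt$, so it suffices to produce a dominating set $S$ together with a feasible dual $y$ for which $w(S)\le (2\alpha+1)(1+\eps)\sum_v y_v$.

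The algorithm maintains such a $y$, initially zero, together with the set of still-undominated vertices, where by \emph{residual degree} of a vertex I mean its number of still-undominated neighbors. It sweeps through the $O(\log(\Delta/\alpha))$ multiplicative scales of residual degree, from $\Delta$ down to $\Theta(\alpha)$, spending $O(1/\eps)$ rounds on each. At the scale with residual-degree threshold $\tau$, it raises the $y_v$ of undominated vertices and, whenever the dual constraint of a vertex $u$ of residual degree $\Theta(\tau)$ reaches $(1-\eps)w_u$, places $u$ into $S$ and marks all of $N[u]$ dominated (which freezes their $y$-growth); near-simultaneous tightenings are resolved by a local symmetry-breaking step as in maximal-matching-style implementations. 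Once the scales are exhausted every residual closed neighborhood has size $O(\alpha)$, so a short clean-up dominates the few remaining vertices in $O(1/\eps)$ further rounds, which makes $S$ a dominating set; and dual feasibility is preserved throughout since $y_v$ stops growing before any incident constraint can exceed its bound.

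For the cost bound, each $u\in S$ was selected only once $\sum_{v\in N[u]}y_v\ge(1-\eps)w_u$, so $w(S)\le\frac{1}{1-\eps}\sum_{u\in S}\sum_{v\in N[u]}y_v=\frac{1}{1-\eps}\sum_v y_v\cdot\bigl|\{u\in S:\,v\in N[u]\}\bigr|$. Thus the whole theorem reduces to the key combinatorial claim that \emph{every vertex lies in at most $2\alpha+1$ selected closed neighborhoods}, i.e.\ $|S\cap N[v]|\le 2\alpha+1$ for all $v$. This is exactly where bounded arboricity enters: each chosen $u$ was tightened only because some vertex $p_u\in N[u]$ was still undominated when $u$ was added, and adding $u$ dominates $p_u$, so for chosen vertices added at distinct times the witnesses $p_u$ are distinct; together with the adjacencies $u\sim v$ and $u\sim p_u$ these furnish enough edges inside the $2$-hop neighborhood of $v$ that a careful edge-counting (equivalently, forest-decomposition) argument, in the tight form of Bansal--Umboh and Dvo\v{r}\'ak's refinement, caps the number of such $u$ by $2\alpha+1$. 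Combining, $w(S)\le\frac{2\alpha+1}{1-\eps}\sum_v y_v\le(2\alpha+1)(1+\eps)\opt$ after rescaling $\eps$.

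The main obstacle is making the primal--dual dynamics, the arboricity-based overlap bound, and the $O(\log(\Delta/\alpha)/\eps)$ round budget all hold simultaneously. Running the dual growth in geometric steps keeps the round count logarithmic in the degree spread, but one must verify that this discretization costs only a further $(1+\eps)$ factor and never breaks dual feasibility, and that the termination/clean-up argument genuinely kicks in once the residual neighborhoods are $O(\alpha)$-sized, without harming the overlap bound. The most delicate point in the analysis itself is pinning the arboricity loss to exactly $2\alpha+1$ rather than $O(\alpha)$, which requires the tight version of the incidence edge-counting argument and the matching choice of the tightening threshold.
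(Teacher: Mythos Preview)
Your central combinatorial claim—that every vertex lies in at most $2\alpha+1$ selected closed neighborhoods, i.e.\ $|S\cap N[v]|\le 2\alpha+1$—is false, and invoking Bansal--Umboh/Dvo\v{r}\'ak does not save it. Consider a spider of arboricity $1$: a center $v$ joined to $u_1,\dots,u_k$, each $u_i$ joined to a private leaf $w_i$; set $w(v)$ and all $w(w_i)$ large and $w(u_i)=1$. Every $u_i$'s dual constraint tightens first (at dual level $y\approx 1/3$), and since each $u_i$ still has an undominated private neighbor $w_i$ right up to the end, all $k$ of them enter $S$---whether simultaneously or under any symmetry-breaking order---giving $|S\cap N[v]|=k$. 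Your witness/edge-counting heuristic does not yield a contradiction here: taking $p_{u_i}=w_i$ gives $2k$ edges $\{vu_i,\,u_iw_i\}$ on $2k{+}1$ vertices, which is perfectly consistent with arboricity $1$. The Bansal--Umboh/Dvo\v{r}\'ak argument bounds the cost of rounding a fractional LP solution; it says nothing about $|S\cap N[v]|$ for a primal--dual output.

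The paper's analysis sidesteps this by charging \emph{asymmetrically} through a fixed out-degree-$\alpha$ orientation: each $u\in S$ is charged only to the packing mass on its incoming neighbors, and every vertex is an incoming neighbor of at most $\alpha$ others, so the charging multiplicity is $\alpha$, not $2\alpha+1$. Making that charging cover $w_u$ requires an \emph{upper} bound on the packing value of every dominated vertex, namely $x_v\le\lambda\tau_v$ with $\lambda=\tfrac{1}{(2\alpha+1)(1+\eps)}$, so that $u$ together with its $\le\alpha$ out-neighbors contribute at most $(\alpha{+}1)\lambda w_u$ and the incoming side retains at least $w_u\bigl(\tfrac{1}{1+\eps}-(\alpha{+}1)\lambda\bigr)$. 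Your scheme imposes no such cap on frozen $y$-values, which is exactly why an overlap bound cannot substitute for this step. The remaining $\alpha{+}1$ in the factor $2\alpha{+}1$ is then paid by the clean-up: every still-undominated $v$ satisfies $x_v\ge\lambda\tau_v$, so adding one minimum-weight neighbor for each costs at most $\lambda^{-1}\sum_{\text{undom.}}x_v$. Your clean-up assertion that residual closed neighborhoods have size $O(\alpha)$ after the sweep likewise has no justification in the weighted setting without this packing-value control.
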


Our algorithm is faster compared to the two previous $O(\alpha)$-approximation algorithms that take $O(\log^2{\Delta}/\eps^4)$ rounds \cite{bansal2017tight,nearsighted} and $O(\alpha \log{n})$ rounds \cite{DBLP:conf/wdag/MorganSW21}, and its $O(\log \Delta)$ complexity is nearly optimal, as we will discuss later when describing lower bounds. Its approximation ratio of $(2\alpha+1)(1+\eps)$ matches the best approximation that was previously obtained by a distributed algorithm. It is a deterministic algorithm, where the previous $O(\alpha \log{n})$ round algorithm was randomized. Moreover, to the best of our knowledge, our algorithm is the first distributed algorithm that solves the \emph{weighted} version of the problem.

\paragraph{Improved Approximation}

We also show a randomized algorithm with an improved approximation of $\alpha(1+o(1))$, giving the following.

\begin{restatable}{theorem}{ThmRandMDS}
\label{thm:rand-mds}
For any $1 \leq t \leq \frac{\alpha}{\log \alpha}$, there is a randomized algorithm with expected $(\alpha + O(\frac{\alpha}{t}))$-approximation factor for the minimum weighted dominating set problem in graphs with arboricity at most $\alpha$. The algorithm runs in $O(t \log \Delta)$ rounds in the \CONGEST model.
\end{restatable}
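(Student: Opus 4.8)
The plan is to reuse the engine behind \cref{thm:det-ds} and improve only the final rounding. That engine produces, in $O(\log\Delta)$ rounds, a fractional dominating set $x$ of weight $(1+o(1))\opt$ together with the structural consequence of bounded arboricity, namely an orientation of the edges (equivalently a decomposition into forests) in which every vertex has out-degree at most $(1+o(1))\alpha$; the deterministic part then threshold-rounds $x$ and pays roughly $2\alpha+1$. The source of the ``$2$'' is that a vertex $v$ whose fractional domination demand is not met by $x_v$ itself can have that demand covered either by its $\le (1+o(1))\alpha$ out-neighbours or by its (possibly many) in-neighbours, and the deterministic argument is forced to pay for a favourable event in \emph{both} regimes. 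Randomization lets us pay, in expectation, for only one of the two, which is what brings the factor down towards $\alpha$.

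Concretely, I would sample each vertex $v$ into a candidate set independently with probability proportional to $x_v$ (scaled by an $O(1)$ factor), so the expected weight is $O(1)\cdot\opt$ and, since $\sum_{u\in N[v]}x_u\ge 1$, every vertex fails to be dominated with only constant probability; the still-undominated vertices are then handled by their out-neighbours, which can be threshold-rounded because there are at most $(1+o(1))\alpha$ of them, and by their in-neighbours, whose cost is charged --- forest by forest --- to the fractional mass they carry. To avoid the usual coupon-collector overhead in the sampling step, I would split the fractional mass into $t$ slices and run $t$ phases, in phase $i$ recomputing the sampling probabilities from the \emph{residual} fractional solution supported on the not-yet-dominated vertices; committing only an $O(1/t)$-fraction of the mass per phase keeps every over-count within a $(1+O(1/t))$ factor, and bounded arboricity ensures the residual demand --- and hence the contribution of the vertices dominated ``from below'' --- shrinks to $O(\alpha/t)\cdot\opt$ after $t$ phases. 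Summing the three contributions (the vertices' own mass, the ``out-neighbour'' term of size $\alpha(1+O(1/t))$, and the residual ``in-neighbour'' charge of size $O(\alpha/t)$) gives the claimed $\alpha+O(\alpha/t)$, and the round complexity is $O(t)$ times the $O(\log\Delta)$ cost of a single phase.

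The main obstacle is the weighted ``from below'' accounting carried out in lockstep with the iterated sampling: in-neighbours, unlike out-neighbours, cannot be thresholded, so one must argue that across the $(1+o(1))\alpha$ forests the sampled in-neighbours of residual vertices carry total fractional weight at most $(1+O(1/t))\opt$, and this has to hold \emph{after} the residual instance has been whittled down by the earlier phases rather than for the original instance. Controlling how the residual fractional solution and its forest structure evolve over the $t$ phases --- and, relatedly, why $t=\Theta(\alpha/\log\alpha)$ phases already suffice to make the residual negligible, so that pushing $t$ further would not help --- is where the real work lies; once that is set up, the rest is bookkeeping over a constant number of $O(\log\Delta)$-round primitives (fractional dominating set approximation, a low-out-degree orientation, and one round of sampling).
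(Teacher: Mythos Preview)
Your plan rests on a misreading of what the engine behind \cref{thm:det-ds} produces. \cref{lem:partial-ds} is a primal--dual procedure: it outputs a feasible \emph{dual} packing $\{x_v\}$ (so $\sum_{u\in N^{+}_v}x_u\le w_v$ for every $v$, and hence $\sum_v x_v\le\opt$) together with an \emph{integral} partial dominating set $S$. It never computes a fractional dominating set, let alone one of weight $(1+o(1))\opt$; no $O(\log\Delta)$-round algorithm for that is known, the Kuhn--Moscibroda--Wattenhofer LP solver needing $\Theta(\eps^{-4}\log^2\Delta)$ rounds for a $(1+\eps)$-approximate primal. Your sampling step, which presupposes $\sum_{u\in N^{+}_v}x_u\ge 1$, therefore has nothing to sample from. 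The orientation is likewise used only in the analysis of \cref{lem:partial-ds}; the algorithm never computes it, and a $(1+o(1))\alpha$-out-degree orientation is not known to be computable in $O(\log\Delta)$ rounds either, so the ``threshold-round the out-neighbours'' step has no algorithmic footing.

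The paper's mechanism for removing the ``$2$'' is different and does not touch the rounding at all. The coefficient in \cref{lem:partial-ds}\ref{item:partial-ds-prop-a} is $\alpha\bigl(\tfrac{1}{1+\eps}-\lambda(\alpha+1)\bigr)^{-1}$, which tends to $\alpha$ as $\eps$ and $\lambda(\alpha+1)$ tend to $0$; the $(2\alpha+1)$ in \cref{thm:det-ds} came only from the particular balance $\lambda=\tfrac{1}{(2\alpha+1)(1+\eps)}$, chosen so that paying $\tau_v$ per undominated vertex matches the cost of $S$. Taking instead $\eps=\Theta(1/t)$ and $\lambda=\Theta\bigl(1/(t\alpha)\bigr)$ makes the \emph{deterministic} phase already yield $w_S\le(\alpha+O(\alpha/t))\opt$ in $O(t\log\Delta)$ rounds. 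The price is that undominated vertices now carry only $x_v\ge\lambda\tau_v=\Theta(\tau_v/(t\alpha))$, far too little to afford one dominator each; but property~\ref{item:partial-ds-prop-b} turns the residual instance into (a weighted analogue of) a set cover with maximum set size $\lambda^{-1}=O(t\alpha)$, and a separate randomized routine (\cref{lem:small-sets}, geometric-probability sampling over $\lceil\log_\gamma\lambda^{-1}\rceil$ phases) solves it with approximation $\gamma(\gamma+1)\lceil\log_\gamma\lambda^{-1}\rceil=O(\alpha/t)$ for $\gamma=\max(2,\alpha^{1/(2t)})$. All the randomness sits in this residual set-cover step; there is no LP rounding anywhere.
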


In particular, by setting $t = \frac{\alpha}{\log \alpha}$, we can get $(\alpha+O(\log{\alpha}))$-approximation in $O(\alpha \log{\Delta})$ time. For algorithms that are only using polynomial time computations, this approximation factor is tight in the first order as it is NP-hard to achieve $\alpha - 1 - \eps$ approximation \cite{bansal2017tight}.

%\saeed{I thought that maybe it's good to mention this improvement in the introduction. please check it.}
As a byproduct of our randomized algorithm, we improve on the approximation factor for the minimum dominating set problem on general graphs with maximum degree $\Delta$. Previously, the best approximation factor for the problem in $O(k^2)$ (for a parameter $k$) rounds was due to the work of Kuhn, Moscribroda, and Wattenhofer~\cite{nearsighted} where they provided a randomized algorithm with expected approximation factor $O(k\Delta^{\frac{2}{k}} \log \Delta)$. We drop the $O(\log \Delta)$ term in their result.

\begin{restatable}{theorem}{ThmRandGeneralMDS}
\label{thm:dsgeneral}
For any $k$, there is randomized algorithm that computes a weighted dominating set with expected approximation factor at most $\Delta^{\frac{1}{k}}(\Delta^{\frac{1}{k}} + 1)(k+1) \cdot = O(k\Delta^{\frac{2}{k}})$ in $O(k^2)$ rounds in the \CONGEST model.
\end{restatable}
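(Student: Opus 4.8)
Building on the technique developed for \cref{thm:rand-mds}, the plan is to split the general-graph instance into roughly $k$ bounded-arboricity instances placed at geometrically spaced degree scales. Put $\delta:=\lceil\Delta^{1/k}\rceil$, so $\delta^k\ge\Delta$, and give each vertex $v$ a \emph{level} $\ell(v)\in\{0,\dots,k\}$ by bucketing $\deg_G(v)+1$ into the intervals $(\delta^{i-1},\delta^i]$ (level $0$ being the isolated vertices, which must enter the solution anyway). Process the $k+1$ levels and, for level $i$, produce a partial set $S_i$ dominating every level-$i$ vertex; output $S:=\bigcup_{i=0}^k S_i$. The key structural point is that the sub-instance faced at level $i$ — dominating the level-$i$ vertices while only charging each such vertex $v$ to a dominator $u\in N[v]$ whose degree sits in a single multiplicative $\delta$-window — is a \emph{bounded-arboricity} instance: on the ``to-dominate'' side every vertex then has $O(\delta)$ admissible dominators, so every subgraph of the induced bipartite incidence graph has arboricity $O(\delta)=O(\Delta^{1/k})$. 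Feeding this to (a truncated form of) the routine behind \cref{thm:rand-mds} yields $S_i$ of expected weight at most $\delta(\delta+1)\cdot\opt$, where the factor $\delta+1$ is the ``$\alpha+1$''-type guarantee the arboricity machinery gives at arboricity $\alpha=\delta$, and the extra factor $\delta$ pays for having restricted $v$ to a same-scale dominator instead of the possibly far-higher-degree dominator used by an optimum solution.

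Correctness of $S$ is immediate: each vertex has a unique level, and the corresponding run dominates it. For the weight, sum the $k+1$ per-level bounds to get expected weight $(k+1)\,\delta(\delta+1)\cdot\opt=\Delta^{1/k}(\Delta^{1/k}+1)(k+1)\cdot\opt=O(k\Delta^{2/k})\cdot\opt$. For the round complexity, observe that the level assignment is computable in one round, and that — following the same phase structure as in the KMW bound and exploiting that the level-$i$ sub-instance has arboricity only $O(\delta)$ — each of the $O(k)$ phases can be carried out in $O(k)$ rounds of \CONGEST (messages are $O(\log n)$ bits: degrees, weights, and single identifiers), for $O(k^2)$ rounds in total.

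The crux — and the step I expect to be the real obstacle — is the weight accounting: one must bound the cost spent dominating level $i$ against the \emph{global} optimum $\opt$, not merely against the optimum of the level-$i$ sub-instance, which is subtle because an optimum solution may dominate a level-$i$ vertex by a vertex of a much higher level that is deliberately excluded from that vertex's sub-instance. The intended resolution is a charging argument in the style of the bounded-arboricity analysis: such a high-level optimum vertex $u$ is dominated (and paid for) in its own earlier phase, it can be ``responsible'' for excluding only boundedly many low-level vertices relative to its own scale, and one must verify — with care in the \emph{weighted} case, where a heavy excluded vertex could otherwise wreck the bound — that these contributions telescope against $\opt$ with exactly the claimed $\delta^2$ loss. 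The remaining ingredients — the level decomposition, the reduction of each phase to the bounded-arboricity primitive of \cref{thm:rand-mds}, and the bookkeeping of constants — are then routine.
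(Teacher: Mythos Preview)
Your proposal takes a fundamentally different route from the paper, and the route has a genuine gap.

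The paper's proof is a one-line parameter instantiation of \cref{lem:small-sets}: take $S=\emptyset$, the trivial initial packing $x_v=\tau_v/(\Delta+1)$ (so property~\ref{item:partial-ds-prop-b} holds with $\lambda=\frac{1}{\Delta+1}$), and set $\gamma=\Delta^{1/k}$. The expected-weight bound $\gamma(\gamma+1)\lceil\log_\gamma\lambda^{-1}\rceil\cdot\opt$ then yields the claimed $\Delta^{1/k}(\Delta^{1/k}+1)(k+1)\cdot\opt$, and the round count is $O(\log_\gamma\lambda^{-1}\cdot\log_\gamma\Delta)=O(k\cdot k)=O(k^2)$. There is no degree bucketing and no reduction to bounded-arboricity sub-instances; the factor $k+1$ arises as $\lceil\log_{\Delta^{1/k}}(\Delta+1)\rceil$, not as a count of levels.

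Your degree-level decomposition, by contrast, does not actually produce bounded-arboricity sub-instances. You assert that after restricting a level-$i$ vertex $v$ to dominators ``in a single multiplicative $\delta$-window'' it has only $O(\delta)$ admissible dominators. But a level-$i$ vertex has up to $\delta^i$ neighbours, and nothing prevents all of them from lying in the same degree window; the bipartite incidence graph at level $i$ can therefore have arboricity $\Theta(\delta^i)$, not $O(\delta)$, and you cannot invoke the bounded-arboricity primitive with $\alpha=\delta$. Separately, the cross-level charging you yourself flag as ``the real obstacle'' is never carried out: a single optimum vertex $u$ of high level may dominate $\Theta(\deg u)$ low-level vertices, and there is no argument bounding the cost of re-dominating those vertices by same-window dominators against $w_u$ with only a $\delta^2$ loss --- in the weighted case this can fail outright when each such vertex requires a heavy same-window dominator. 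As written, the plan does not go through; the actual proof bypasses all of this by applying the sampling routine of \cref{lem:small-sets} directly to the whole graph.
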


\paragraph{Lower Bound} Our algorithms provide $O(\alpha)$-approximation in $O(\log{\Delta})$ rounds. A natural question is whether a logarithmic dependence on $\Delta$ is needed in the time complexity. While in general graphs it is known that $\Omega(\log{\Delta}/\log{\log{\Delta}})$ rounds are required for obtaining a constant or logarithmic approximation \cite{KMW}, in special graph families faster algorithms are known. For example, in planar graphs one can obtain $O(1)$-approximation in $O(1)$ rounds in the \LOCAL model \cite{lenzen2008can,wawrzyniak2014strengthened}. Also, in the special case of trees that have arboricity 1, a trivial algorithm that takes all non-leaf nodes gives a 3-approximation for unweighted MDS (see \cref{app_miss_proof} for the proof).
Interestingly, we show that as soon as the arboricity is increased from $1$ to $2$, the locality of the minimum dominating set approximation problem changes radically, and any constant or poly-logarithmic approximation needs $\Omega\left(\frac{\log \Delta}{\log\log \Delta}\right)$ time, even in the \LOCAL model where the size of messages is unbounded. 

\begin{restatable}{theorem}{ThmLB}\label{thm:lower-bound}
Any distributed algorithm that computes any constant or poly-logarithmic approximation of the minimum dominating set on graphs of arboricity 2 requires $\Omega\left(\frac{\log \Delta}{\log\log \Delta}\right)$ rounds in the \LOCAL model.
\end{restatable}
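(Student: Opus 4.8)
The plan is to reduce from the KMW lower bound for distributed minimum vertex cover (MVC) approximation, which already gives an $\Omega(\log\Delta/\log\log\Delta)$ round lower bound for any constant (or poly-logarithmic) approximation of MVC in the \LOCAL model, and which is already cited in the excerpt. The key idea is a local, low-overhead reduction: given a graph $G$ on which we want to solve (approximate) vertex cover, build an auxiliary graph $H$ of arboricity $2$ such that (i) any approximate dominating set of $H$ can be locally converted into an approximate vertex cover of $G$ with the same asymptotic approximation quality, and (ii) the construction of $H$ from $G$, and the back-translation, cost only $O(1)$ rounds and do not blow up $\Delta$ by more than a polynomial factor (so that $\log\Delta$ changes only by a constant factor and the lower bound is preserved).

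The concrete gadget I would use: for each edge $e=\{u,v\}$ of $G$, introduce a new \emph{edge-node} $x_e$ and connect it to both $u$ and $v$; additionally attach to each original vertex $u$ a private \emph{forcing} pendant structure so that it is never beneficial to dominate $u$ ``from outside'' — e.g.\ give each $x_e$ its own private pendant leaf so that $x_e$ must be dominated by $x_e$, $u$, or $v$. Because the original vertices form (essentially) a forest-free structure only after we are careful, the cleanest way to guarantee arboricity $2$ is to note that $H$ decomposes into two forests: one forest consisting of the stars centered at the original vertices of $G$ (each edge $u$–$x_e$ oriented toward $x_e$, giving in-degree $1$ at every $x_e$, hence a forest), and a second forest absorbing the pendant leaves. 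One must check that with the right gadget the edge set of $H$ really does split into two forests; this is the place where the exact gadget has to be chosen so that every new vertex has a ``parent'' in each of at most two forests. Then: a dominating set $D$ of $H$ can be pushed to a vertex cover of $G$ by replacing any chosen edge-node $x_e$ with one of its endpoints $u$ or $v$ (which only can help coverage of $e$), and discarding pendant leaves; conversely a vertex cover $C$ of $G$ is a dominating set of $H$ (together with, say, choosing each leaf's neighbor, a constant-factor overhead). This shows $\mathrm{MDS}(H) = \Theta(\mathrm{MVC}(G))$ and that a $c$-approximation of the former yields an $O(c)$-approximation of the latter.

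The round-complexity bookkeeping is then routine: each node of $G$ simulates itself plus the $O(\deg)$ gadget nodes incident to it, so one round of an algorithm on $H$ costs $O(1)$ rounds on $G$ in the \LOCAL model (message sizes are unbounded, so the simulation is immediate); and $\Delta_H = \Theta(\Delta_G)$, so $\log\Delta_H = \Theta(\log\Delta_G)$ and likewise for $\log\log$. Hence a $T$-round algorithm on arboricity-$2$ graphs achieving constant/poly-log approximation for MDS yields an $O(T)$-round algorithm on general graphs achieving constant/poly-log approximation for MVC, and KMW forces $T = \Omega(\log\Delta/\log\log\Delta)$.

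The main obstacle I anticipate is designing the gadget so that all three desiderata hold \emph{simultaneously}: arboricity exactly $2$ (so the theorem's hypothesis is met, not arboricity $3$), the approximation-preserving back-and-forth translation, and a benign effect on $\Delta$. In particular one must avoid gadgets that secretly raise the arboricity (e.g.\ adding triangles) or that blow up the maximum degree super-polynomially (e.g.\ attaching a huge clique/star to force a vertex into the solution), since an exponential blow-up in $\Delta$ would wreck the $\log\Delta$ accounting. I would resolve this by using only pendant \emph{paths/leaves} of constant length as forcing gadgets — these add at most one to each original degree and are trivially forest-decomposable — and by carefully checking the two-forest decomposition of the resulting $H$ edge by edge.
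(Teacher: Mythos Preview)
Your high-level plan---reduce from the KMW vertex-cover lower bound by subdividing edges to get an arboricity-$2$ graph and then translate dominating sets back to vertex covers---is exactly the right starting point, and the back-translation (replace each selected edge-node $x_e$ by one of its endpoints) is correct and is what the paper does. The arboricity bound is also fine: orienting both edges of each $x_e$ outward gives an acyclic orientation with out-degree at most $2$.

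The gap is the claim $\mathrm{MDS}(H)=\Theta(\mathrm{MVC}(G))$, specifically the ``constant-factor overhead'' in the forward direction. A vertex cover $C$ of $G$ dominates every $x_e$ in $H$, but it need not dominate the original vertices $u\notin C$ (whose only neighbors in $H$ are edge-nodes, none of which are in $C$). Your fix---``choosing each leaf's neighbor''---adds one node per pendant, i.e.\ $\Theta(n)$ or $\Theta(m)$ extra nodes, which is \emph{additive}, not multiplicative. Concretely, if $G=K_{1,n-1}$ then $\mathrm{MVC}(G)=1$ but $\mathrm{MDS}(H)=n-1$; more relevantly, for the KMW graph one only has $\mathrm{MVC}(G)\ge m/\Delta\ge n/\Delta$, so the additive $n$ term inflates the approximation ratio by a factor $\Theta(\Delta)$ after back-translation, and you get no contradiction from a constant-factor MDS algorithm. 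No pendant gadget of constant size repairs this, because the obstruction is global: you must pay to dominate $\Theta(n)$ original vertices regardless.

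The paper resolves this by taking $\Delta^2$ disjoint copies $G_1,\dots,G_{\Delta^2}$ of the (subdivided) KMW graph and adding a single set $T$ of $n$ ``anchor'' vertices, one per original $G$-vertex, each joined to all its $\Delta^2$ copies. Now $T$ dominates every original-copy vertex at total cost $n$, while the copies contribute $\Delta^2\cdot\mathrm{MVC}(G)\ge \Delta\cdot n$, so the additive $n$ becomes a $(1+1/\Delta)$ multiplicative loss. A $c$-approximate dominating set, after pushing middle nodes to endpoints, yields a vertex cover $S_i$ in each $G_i$; averaging the indicator vectors over the $\Delta^2$ copies gives a \emph{fractional} vertex cover of $G$ of cost at most $c(1+1/\Delta)\cdot\mathrm{OPT}^G_{\mathrm{MFVC}}$, contradicting KMW. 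Two further details you should incorporate: the averaging step is why the reduction targets fractional MVC (which is what KMW actually lower-bounds), and the orientation showing arboricity $2$ must also account for the $T$-edges (orient them into $T$).
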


Hence, the $O(\log \Delta)$ round complexity of our algorithms is nearly optimal.

\subsection{Our Techniques} \label{sec:our_techniques}
%\michal{feel free to edit this part, in particular feel free to add more details on any part.}
At a high-level, our algorithms construct a dominating set in two steps. In the first step, we construct a partial dominating set $S$ that has the following nice properties. First, the weight of the set $S$ is a good approximation for the optimal dominating set. Second, the nodes $T$ that are undominated by $S$ have a nice structure that allows us to find efficiently a dominating set for $T$ with a good approximation guarantee. In the second step, we construct a dominating set for $T$. In fact, in our first algorithm, we show that if for each node in $T$ we add one node to the dominating set, we already obtain a $(2\alpha+1)(1+\eps)$-approximation. In our second algorithm, we show how to exploit the structure of $T$ to obtain a better approximation.
Our algorithm for constructing the partial dominating set is inspired by the primal-dual method. To explain the idea, we first describe in \cref{sec:unweighted_MDS} a simpler variant of our algorithm that works for unweighted graphs. Next, in \cref{sec:weightedMDS}, we generalize the algorithm for weighted graphs, and also show our randomized algorithm with improved approximation. Our lower bound appears in \cref{sec:LB}. 

\paragraph{Comparison to \cite{sun2021improved}} We remark that in a very recent independent work \cite{sun2021improved}, the author uses the primal-dual method to obtain approximation algorithm for MDS in bounded arboricity graphs in the centralized setting. This algorithm however does not seem to translate to an efficient distributed algorithm, as it has a reverse-delete step that makes it inherently sequential. In this step, the algorithm goes over all the nodes that were added to the dominating set $S$ in reverse order and removes them from $S$ if it is still a valid dominating set. This part is crucial for obtaining a good approximation ratio, and the analysis crucially relies on this part.
While in our algorithm we also use the primal-dual method, we use it in a different way, and only to construct a partial solution for the problem.

%----

\section{Preliminaries}

The input graph is $G = (V, E)$ with $n$ nodes, $m$ edges, maximum degree $\Delta$, and arboricity $\alpha$. For each node $v \in V$, let $N_v$ be the set of neighbors of $v$ and $N^{+}_v = \{v\} \cup N_v$. For a set of nodes $S\subseteq V$, let $N^{+}_S = \bigcup_{v \in S} N^{+}_v$ be set of nodes that are dominated by $S$. Let $w_v$ be the weight of $v$ and for set $S \subseteq V$, let $w_S = \sum_{v \in S} w_v$ be the total weight of the set $S$. %Moreover, for each node $v \in V$, let $\tau_v = \min_{u \in N^{+}_v} w_u$ be the minimum weight of a node that can dominate $v$. 
We assume all the weights are positive integers and are bounded by $n^c$ for some constant $c$.

Our algorithms are inspired by the primal-dual method.
We associate a packing value $0 \leq x_v$ to each node such that for any node $u$, the value $X_u = \sum_{v \in N^{+}_u} x_v \leq w_u$. From weak duality, we have the following.

\begin{lemma}\label{claim_lb_opt}
For any feasible packing, $\sum_{v \in V} x_v \leq \opt$ where $\opt$ is the weight of the minimum dominating set of $G$.
\end{lemma}

\begin{proof}
Let $S^*$ be a dominating set of $G$ with weight $\opt$. We have:
\begin{equation*}
    \opt = \sum_{v \in S^*} w_v
    \geq \sum_{v\in S^*} X_v
    \geq \sum_{u\in V} x_u 
\end{equation*}
The last inequality comes from the fact that $S^*$ is a dominating set. Hence, each node $u$ contributes to at least one of $X_v$ for $v \in S^*$.
\end{proof}

\paragraph{Model} Our algorithms work in the standard \CONGEST model. We have a communication network with $n$ nodes that is identical to the input graph. Nodes communicate with each other by sending $O(\log{n})$ bit messages in synchronous rounds. In the beginning, each node only knows its own weight and set of neighbors. At the end of the algorithm, each node should know if it is part of the constructed dominating set.
Our lower bound works even for the more powerful \LOCAL model where the size of messages is unbounded. 

\section{Algorithm for Unweighted MDS} \label{sec:unweighted_MDS}

As a warm-up, we start by describing a simpler variant of our algorithm that works for unweighted graphs, showing the following.

\begin{theorem}
\label{thm:connected-subgraph}
For any $0 < \eps < 1$, there is a deterministic $(2\alpha+1)(1 + \eps)$-approximation algorithm for the minimum dominating set problem in unweighted graphs with arboricity at most $\alpha$ that runs in $O\left(\frac{\log (\Delta/\alpha)}{\eps}\right)$ rounds in the \CONGEST model.
\end{theorem}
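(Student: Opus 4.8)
The plan is to build the dominating set in two stages: a discretized primal--dual process that produces a partial set $S$ together with a feasible packing $x$, and then a trivial step that adds one node per still-undominated vertex. Fix a geometric schedule $x^*_\ell = (1+\eps)^\ell x^*_0$ for $\ell = 0,1,\dots,L$, with $x^*_0 = \tfrac{1}{\Delta+1}$ and $L = \big\lceil \log_{1+\eps}\tfrac{\Delta+1}{2\alpha+1}\big\rceil = O\!\big(\tfrac{\log(\Delta/\alpha)}{\eps}\big)$, so that $x^*_L = \tfrac{1}{2\alpha+1}$ up to a $(1+\eps)$ factor. Call a node \emph{dominated} once some vertex of its closed neighborhood has joined $S$, and \emph{active} otherwise. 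Start with all nodes active and $x_v = x^*_0$ (feasible, as $(\deg u+1)x^*_0 \le 1$). In phase $\ell = 1,\dots,L$: every active node raises $x_v$ to $x^*_\ell$; then every $u$ with $X_u = \sum_{v\in N^+_u} x_v \ge 1$ joins $S$, and every node that thereby gets a neighbor in $S$ becomes dominated and freezes its value. Each phase is $O(1)$ rounds of \CONGEST (exchange current levels, recompute $X_u$, announce joining $S$), so the process takes $O(L)$ rounds. Let $T$ be the set of nodes still active at the end; output $S \cup T$, which dominates $G$ since every node is dominated by $S$ or lies in $T$.

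For the approximation I combine three facts. (i) Each $x_v$ is multiplied by at most $1+\eps$ per phase (or frozen), so $X_u$ — a sum of such — grows by at most $1+\eps$ per phase; since any $u$ with $X_u\ge 1$ immediately joins $S$ and freezes its neighborhood, the invariant $X_u \le 1+\eps$ holds throughout. Hence $x/(1+\eps)$ is a feasible packing and \Cref{claim_lb_opt} gives $\sum_v x_v \le (1+\eps)\,\opt$. (ii) Every $v\in T$ was active in all $L$ phases: had it been blocked in some phase by a constraint $X_u\ge 1$ with $u\in N^+_v$, then $u$ would have joined $S$ and $v$ would be dominated. So $x_v = x^*_L \ge \tfrac{1}{2\alpha+1}$, giving $|T| \le (2\alpha+1)\sum_{v\in T} x_v$. (iii) $|S| \le (2\alpha+1)\,P$, where $P = \sum_{v \text{ dominated}} x_v$ — this is the crux, treated next.

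To prove (iii), observe each $u\in S$ has $X_u\ge 1$, so $|S| \le \sum_{u\in S}X_u = \sum_{u\in S} x_u + \sum_{u\in S}\sum_{v\in N_u} x_v$; every vertex appearing on the right ($u\in S$ or a neighbor of some $u \in S$) is dominated, so it suffices to bound this by $(2\alpha+1)P$. Fix, \emph{only for the analysis}, an orientation of $G$ with out-degree at most $\alpha$; this exists by Nash--Williams since the arboricity is at most $\alpha$, and costs no rounds. Split $\sum_{v\in N_u} x_v$ by orientation. The incoming part is $\sum_{u\in S}\sum_{v\to u} x_v = \sum_v x_v\cdot|\{u\in S: v\to u\}| \le \alpha\sum_{v\text{ dom}}x_v = \alpha P$, since each $v$ has at most $\alpha$ out-edges and any such $v$ neighbors an $S$-node. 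The outgoing part is $\sum_{u\in S}\sum_{u\to v} x_v \le \alpha\, x^*_L\,|S| = \tfrac{\alpha}{2\alpha+1}|S|$, since $x_v \le x^*_L = \tfrac{1}{2\alpha+1}$ and each $u$ has at most $\alpha$ out-edges. Finally $\sum_{u\in S}x_u \le P$. Hence $|S| \le (\alpha+1)P + \tfrac{\alpha}{2\alpha+1}|S|$, and rearranging yields $|S| \le (2\alpha+1)P$ exactly.

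Combining, $|S\cup T| = |S| + |T| \le (2\alpha+1)\big(\textstyle\sum_{v\text{ dom}}x_v + \sum_{v\in T}x_v\big) = (2\alpha+1)\sum_v x_v \le (2\alpha+1)(1+\eps)\,\opt$, which is the claimed bound after rescaling $\eps$. The main obstacle, and where the argument must be handled with care, is making the discretized dynamics rigorous: verifying that the multiplicative $(1+\eps)$ updates keep the packing feasible up to a $(1+\eps)$ factor (so \Cref{claim_lb_opt} applies), that every node of $T$ provably reaches the top level $x^*_L$, and that the number of levels is $O(\eps^{-1}\log(\Delta/\alpha))$ rather than $O(\eps^{-1}\log\Delta)$ — the saving coming from starting the packing at $\tfrac{1}{\Delta+1}$ and needing to reach only $\tfrac{1}{2\alpha+1}$; the low-out-degree orientation enters purely in the analysis of step (iii) and is never computed by the algorithm.
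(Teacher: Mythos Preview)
Your proof is correct and follows essentially the same primal--dual approach as the paper: multiplicative $(1+\eps)$ updates of the packing values from $\tfrac{1}{\Delta+1}$ up to $\approx\tfrac{1}{2\alpha+1}$, adding nodes whose constraint becomes (nearly) tight to $S$, then taking all still-undominated nodes as $T$, and bounding $|S|$ via an out-degree-$\alpha$ orientation that is used only in the analysis. The only cosmetic differences are your join threshold of $1$ versus the paper's $\tfrac{1}{1+\eps}$ (so your packing is feasible only up to a $(1+\eps)$ factor rather than exactly) and the raise-then-check versus check-then-raise ordering within a phase; these just shuffle $(1+\eps)$ factors around and are absorbed by your final rescaling of~$\eps$.
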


Our approach is inspired by the primal-dual method. Each node $v$ has a packing value $x_v$, such that during the algorithm for each node, the value $X_v = \sum_{u \in N^{+}_v} x_u \leq 1$. From \cref{claim_lb_opt} we have that $\sum_{v \in V} x_v \leq \opt$, where $\opt$ is an optimal solution.
Our algorithm first builds a partial dominating set $S$ with the following properties. 

\begin{lemma}\label{lem:part_dom}
For any $0 < \eps < 1$, there is a deterministic algorithm that takes $O\left(\frac{\log (\Delta/\alpha)}{\eps}\right)$ rounds and outputs a partial dominating set $S \subseteq V$ along with packing values $\{x_v\}_{v \in V}$ such that
\begin{enumerate}
    \item $|S| \leq (2\alpha + 1)(1+\eps) \sum_{v \in N^{+}_S} x_v$.
    \item For each node $v \not \in N^{+}_S$, we have $x_v \geq \frac{1}{(2\alpha+1)(1+\eps)}$.
\end{enumerate}
\end{lemma}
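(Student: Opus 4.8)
The goal is to produce a partial dominating set $S$ together with a feasible packing $\{x_v\}$ so that (1) the size of $S$ is charged against the packing mass on dominated nodes, up to the factor $(2\alpha+1)(1+\eps)$, and (2) every still-undominated node already carries packing value at least $\frac{1}{(2\alpha+1)(1+\eps)}$. The natural primal-dual approach is to run in phases indexed by a "threshold" that geometrically decreases from $1$ down to roughly $\frac{\alpha}{\Delta}$, giving $O(\eps^{-1}\log(\Delta/\alpha))$ phases. At each phase we have a current threshold $\tau$ (shrinking by a factor $(1+\eps)$ each phase), and we look at nodes $v$ that are not yet dominated and whose current "slack" $1 - X_v$ is large enough that raising $x_v$ up to a level making many neighbors tight is profitable. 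Concretely, I would raise the packing value of every undominated node so that its $X_v$ grows, and then add to $S$ every node $u$ that becomes (nearly) tight, i.e., $X_u \geq \tau$; equivalently, think of it as: a node $u$ is put in $S$ once the packing mass in its closed neighborhood reaches the current threshold.

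**Key steps.**

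\begin{enumerate}
\item \textbf{Phase structure and packing updates.} Maintain thresholds $\tau_0 = 1 > \tau_1 > \cdots$ with $\tau_{i+1} = \tau_i/(1+\eps)$, stopping once $\tau_i$ drops below $\frac{1}{(2\alpha+1)(1+\eps)}$, which after $O(\eps^{-1}\log\alpha)$ — more precisely $O(\eps^{-1}\log(\Delta/\alpha))$ once one tracks degrees — phases. In phase $i$, every node $v$ not yet in $N^+_S$ sets $x_v$ large enough that its own $X_v$ would reach $\tau_i$, i.e. raise $x_v$ by the needed amount, but capped so that no constraint $X_u \le 1$ is violated. All of this is one round of local communication per phase (a node learns the $x$-values of its neighbors), so the round complexity is $O(\eps^{-1}\log(\Delta/\alpha))$ as claimed.

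\item \textbf{Adding nodes to $S$.} After the packing update in phase $i$, add to $S$ every node $u \notin N^+_S$ with $X_u \geq \tau_i$. Since $\tau_i \le 1$ this keeps $\{x_v\}$ feasible throughout. The point is that once a node is dominated we freeze its packing value, so the only nodes whose $x_v$ keeps growing are the undominated ones, and by the time the thresholds bottom out, any node $v$ that is still undominated has been raising $x_v$ at every phase and therefore satisfies $x_v \ge \tau_{\text{last}} \ge \frac{1}{(2\alpha+1)(1+\eps)}$, which is exactly property~2.

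\item \textbf{Charging $|S|$ to the packing mass: the arboricity argument.} This is where bounded arboricity enters. Orient the edges of $G$ so that every node has out-degree at most $\alpha$ (possible since arboricity $\alpha$ implies degeneracy $< 2\alpha$; more carefully one uses a near-optimal low out-degree orientation, achievable locally, or one uses the forest decomposition directly). When a node $u$ is added to $S$ in phase $i$, it is because $X_u = \sum_{v \in N^+_u} x_v \geq \tau_i$, while in the previous phase $X_u < \tau_{i-1} = (1+\eps)\tau_i$. I would amortize the "cost" $1$ of adding $u$ against $\tau_i$, i.e. show $|S| \le \sum_{u \in S} 1 \le \frac{1+\eps}{\text{(something)}}\sum \ldots$; the key combinatorial inequality is that each unit of packing mass $x_v$ (for $v \in N^+_S$) is "used" by at most $2\alpha+1$ nodes of $S$ at the threshold level where they were added — roughly, $v$ itself, plus its $\le\alpha$ out-neighbors, plus the $\le \alpha$ in-neighbors that got added at a comparable threshold, accounting for a factor $2\alpha+1$. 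Combining with the $(1+\eps)$ slack between consecutive thresholds yields $|S| \le (2\alpha+1)(1+\eps)\sum_{v\in N^+_S} x_v$, property~1.
\end{enumerate}

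**Main obstacle.**

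The delicate point is Step 3: making the charging argument yield exactly the factor $2\alpha+1$ rather than something like $2\alpha$ or $3\alpha$. The naive count of "how many nodes of $S$ can a given $v$ be charged to" must be pinned down using the orientation carefully — one wants to say $v$ contributes its mass $x_v$ once for itself and once for each of the $\le\alpha$ nodes it points to and each of the $\le\alpha$ nodes that point to it, but because nodes are added at different phases one needs the geometric threshold decay to ensure that mass counted for a node $u\in S$ at phase $i$ is not double-counted against another $u'\in S$ at a much later phase. The trick is to charge each $v$'s mass only against the $S$-nodes added at the \emph{earliest} phase in which $X_u$ crossed the threshold, and to show each $v$ is "responsible" for at most $2\alpha+1$ such nodes; this, together with the fact that consecutive thresholds differ by $(1+\eps)$, gives the stated bound. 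I expect the rest — the round complexity, feasibility of the packing, and property~2 — to be essentially bookkeeping.
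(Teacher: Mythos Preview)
Your outline diverges from the paper's proof, and the divergence is not merely stylistic: the charging argument in Step~3 does not work as you describe it. You claim that each $v\in N^+_S$ has its mass $x_v$ ``used'' by at most $2\alpha+1$ nodes of $S$, namely $v$ itself, its $\le\alpha$ out-neighbors, and ``$\le\alpha$ in-neighbors that got added at a comparable threshold.'' But $v$'s mass appears in $X_u$ for every $u\in N^+_v$, and the in-degree of $v$ under an $\alpha$-out-degree orientation is unbounded (it can be as large as $\Delta$). There is no reason only $\alpha$ of those in-neighbors land in $S$, regardless of threshold timing, so your count of $2\alpha+1$ is unjustified. The geometric-threshold trick you sketch does not rescue this: many in-neighbors of $v$ can all become tight in the same phase.

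The paper's argument avoids this by charging in the \emph{opposite} direction and using a property your proposal never establishes: every dominated node satisfies $x_v \le \frac{1}{(2\alpha+1)(1+\eps)}$. With a fixed tightness threshold $\frac{1}{1+\eps}$ and the simple multiplicative update $x_v \gets (1+\eps)x_v$ on undominated nodes (starting from $x_v = \frac{1}{\Delta+1}$, running for exactly $r+1$ iterations where $r$ is chosen so the final undominated value just exceeds $\frac{1}{(2\alpha+1)(1+\eps)}$), any node frozen before the last iteration has $x_v\le \frac{1}{(2\alpha+1)(1+\eps)}$. Then for $u\in S$ one subtracts the contribution of $u$ and its $\le\alpha$ out-neighbors (each at most $\frac{1}{(2\alpha+1)(1+\eps)}$) from $X_u\ge\frac{1}{1+\eps}$ to get $\sum_{v\in N^{\mathrm{in}}_u} x_v \ge \frac{\alpha}{(2\alpha+1)(1+\eps)}$, and now the charging is only to \emph{incoming} neighbors, where each $v$ is counted at most $\alpha$ times. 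This is what produces exactly $(2\alpha+1)(1+\eps)$. Your decreasing-threshold scheme with a vaguely specified per-node raise (``set $x_v$ large enough that its own $X_v$ would reach $\tau_i$'') neither yields this upper bound on dominated $x_v$ nor gives a well-defined simultaneous update, and your Property~2 derivation conflates a threshold on $X_v$ with a lower bound on $x_v$.
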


Let $T= V \setminus N^{+}_S$ be the set of undominated nodes. We next show that adding the nodes of $T$ to the dominating set $S$ results in the desired approximation. 

\begin{claim}
The set $S \cup T$ is a dominating set of size at most $(2\alpha+1)(1+\eps)\opt$. 
\end{claim}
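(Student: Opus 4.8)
The plan is to combine the two properties of the partial dominating set $S$ from \cref{lem:part_dom} with the duality bound from \cref{claim_lb_opt}. First I would observe that $S \cup T$ is trivially a dominating set: every node not dominated by $S$ belongs to $T$ by definition of $T = V \setminus N^{+}_S$, so every node is either dominated by $S$ or lies in $T \subseteq S \cup T$. The real content is the size bound, and here I would split $|S \cup T| \leq |S| + |T|$ and bound the two terms separately, both in units of $\sum_{v} x_v$.

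For the first term, Property~1 of \cref{lem:part_dom} directly gives $|S| \leq (2\alpha+1)(1+\eps) \sum_{v \in N^{+}_S} x_v$. For the second term, I would use Property~2: every node $v \in T$ has $x_v \geq \frac{1}{(2\alpha+1)(1+\eps)}$, hence $|T| \leq (2\alpha+1)(1+\eps) \sum_{v \in T} x_v$. Now the key point is that the two index sets $N^{+}_S$ and $T = V \setminus N^{+}_S$ are disjoint, so adding the two inequalities yields
\[
|S \cup T| \leq |S| + |T| \leq (2\alpha+1)(1+\eps)\left( \sum_{v \in N^{+}_S} x_v + \sum_{v \in T} x_v \right) = (2\alpha+1)(1+\eps) \sum_{v \in V} x_v.
\]
Finally, since $\{x_v\}_{v \in V}$ is a feasible packing (the algorithm of \cref{lem:part_dom} maintains $X_v \leq 1$ for all $v$), \cref{claim_lb_opt} gives $\sum_{v \in V} x_v \leq \opt$, and chaining this in completes the bound $|S \cup T| \leq (2\alpha+1)(1+\eps)\opt$.

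I do not expect any serious obstacle here; the claim is essentially a bookkeeping step that packages the two properties of \cref{lem:part_dom} together with weak duality. The one subtlety worth stating explicitly is the disjointness of $N^{+}_S$ and $T$, which is what allows the sum of the two partial packing sums to be upper bounded by the full packing sum $\sum_{v \in V} x_v$ rather than something larger; this is immediate from $T = V \setminus N^{+}_S$. It is also worth noting in passing that feasibility of the packing — needed to invoke \cref{claim_lb_opt} — is part of what \cref{lem:part_dom}'s algorithm guarantees (the invariant $X_v \le 1$), so no extra work is required on that front.
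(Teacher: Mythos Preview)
Your proposal is correct and follows essentially the same argument as the paper: bound $|S|$ via Property~1, bound $|T|$ via Property~2, combine using the disjointness $T = V \setminus N^{+}_S$, and finish with \cref{claim_lb_opt}. The only difference is that you spell out the disjointness and the feasibility of the packing more explicitly than the paper does.
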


\begin{proof}
The set $S \cup T$ is clearly a dominating set, as we added to $S$ all the undominated nodes. From \cref{lem:part_dom}, we have that  $$|S| \leq (2\alpha + 1)(1+\eps) \sum_{v \in N^{+}_S} x_v,$$ $$|T| = \sum_{v \in T} 1 \leq (2 \alpha +1 )(1+\eps) \sum_{v \in T} x_v.$$
Hence, we get that $|S \cup T| \leq (2\alpha+1)(1+\eps)\sum_{v \in V} x_v \leq (2\alpha+1)(1+\eps) \opt$, where the last inequality follows from \cref{claim_lb_opt}.
\end{proof}

To complete the proof, our goal is to prove \cref{lem:part_dom}.

\begin{proof}[Proof of \cref{lem:part_dom}]
For all nodes, we initialize $x_v = \frac{1}{\Delta + 1}$. Let $r$ be the integer such that
\begin{equation*}
    (1 + \eps)^r \frac{1}{\Delta + 1} \leq \frac{1}{(2\alpha + 1)(1+\eps)} < (1 + \eps)^{r+1} \frac{1}{\Delta + 1}
\end{equation*}

\paragraph{Algorithm for Partial Dominating Set} At the beginning, all the nodes are unmarked. Then, we run the following procedure for $r+1$ iterations:  Per iteration, we run the following on each node $v$. All nodes run each line simultaneously.
\begin{enumerate}
    \item Compute $X_v = \sum_{u \in N^{+}_v} x_u$.
    \item If $X_v \geq \frac{1}{1 + \eps}$, add $v$ to $S$ and mark all the nodes in $N^{+}_v$.
    \item If $v$ is not marked, set $x_v \gets x_v (1 + \eps)$.
\end{enumerate}

\begin{lemma}
For all nodes, $X_v$ is always at most 1. 
\end{lemma}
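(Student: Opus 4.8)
I would prove the bound by induction over the $r+1$ iterations, maintaining the invariant that $X_v \le 1$ for every node $v$ at the start of each iteration. Since the packing values change only in step~3 (step~2 merely adds nodes to $S$ and marks them), the value $X_v = \sum_{u\in N^+_v} x_u$ is constant during steps~1--2 of an iteration, so it suffices to verify the invariant initially and that it is preserved by one full iteration. For the base case, at initialization $x_u = \frac{1}{\Delta+1}$ for all $u$, and since $|N^+_v| \le \Delta+1$ we get $X_v = \sum_{u\in N^+_v}\frac{1}{\Delta+1} \le 1$.

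For the inductive step, fix an iteration and a node $v$, and suppose $X_v \le 1$ at its start. Step~3 multiplies each $x_u$ by a factor of at most $1+\eps$, so $X_v$ grows by a factor of at most $1+\eps$ over the iteration. The crux is the observation that $X_v$ \emph{strictly} increases only if some $u \in N^+_v$ is still unmarked after step~2 (otherwise no term of $X_v=\sum_{u\in N^+_v} x_u$ is updated in step~3). But if some $u \in N^+_v$ is unmarked, then $v$ has never been added to $S$: every node joining $S$ marks its entire closed neighborhood, and marks are never removed, so $v\in S$ would force $u\in N^+_v$ to be marked. In particular $v$ did not join $S$ in the current iteration, so its step~2 test $X_v \ge \frac{1}{1+\eps}$ failed; since all per-line updates are simultaneous and step~2 uses the values from the start of the iteration, this means $X_v < \frac{1}{1+\eps}$ at that point.

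Combining the two facts gives the dichotomy: if $X_v$ does not increase during the iteration it stays $\le 1$ by the inductive hypothesis; if it does increase, it started below $\frac{1}{1+\eps}$ and was scaled by at most $1+\eps$, hence ends below $1$. Either way $X_v \le 1$ after the iteration, closing the induction and bounding $X_v$ throughout the algorithm, including at termination. The only delicate point to get right is the bookkeeping of the marked status across iterations — precisely the implication ``$u\in N^+_v$ unmarked $\Rightarrow v\notin S$'' — which relies on marks being permanent and on a node marking \emph{all} of $N^+_v$ (not merely $N_v$) when it joins $S$; I would also make explicit the simultaneity of the updates so that the quantity tested in step~2 is exactly the $X_v$ carried over from the previous iteration.
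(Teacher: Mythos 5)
Your proof is correct and follows essentially the same induction as the paper's: the base case $X_v \le \frac{|N^+_v|}{\Delta+1} \le 1$ and the observation that either $X_v$ is frozen (because $v\in S$ marked all of $N^+_v$) or $v$ failed the threshold test, so $X_v < \frac{1}{1+\eps}$ before being scaled by at most $1+\eps$. Your case split on ``does $X_v$ increase'' is just the contrapositive of the paper's split on ``is $v\in S$''; the extra care about permanence of marks and simultaneity of updates is a faithful elaboration of what the paper leaves implicit.
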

\begin{proof}
At the start of the first iteration, $X_v \leq \frac{|N^{+}_v| }{\Delta + 1} \leq 1$. At the beginning of iteration $i \geq 2$, if $v$ is not in $S$, it means $X_v < \frac{1}{1 + \eps}$ in iteration $i-1$ and so $X_v <1$ in this iteration. If $v$ is in $S$, then $X_v$ is not changed after that and so $X_v \leq 1$.
%For the second claim, let $S^*$ be a dominating set of $G$ with size $\opt$. We have:
%\begin{equation*}
%    \opt = \sum_{v \in S^*} 1
%    \geq \sum_{v\in S^*} X_v
%    \geq \sum_{u\in V} x_u 
%\end{equation*}
%The last line comes from the fact that $S^*$ is a dominating set. Hence, each node $u$ contributes to at least one of $X_v$ for $v \in S^*$.
\end{proof}

To bound the size of $S$, we use the following property of bounded arboricity graphs. 

\begin{observation}\label{obs_bounded_arb}
Let $G$ be a graph with arboricity at most $\alpha$, then the edges of $G$ can be oriented such that the out-degree of each node is at most $\alpha$.\footnote{All of our algorithms work as long the input graph is orientable with the maximum out-degree of at most $\alpha$. Hence, our results can be applied to the slightly larger class of graphs that are decomposable to at most $\alpha$-pseudoforests. A psuedoforest is a graph in which every connected component has at most one cycle.}
\end{observation}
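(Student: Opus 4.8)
The plan is to derive the orientation directly from the definition of arboricity, via a forest decomposition. Since $G$ has arboricity at most $\alpha$, its edge set $E(G)$ can by definition be partitioned into forests $F_1, \ldots, F_k$ with $k \le \alpha$; padding with empty forests if necessary, I would assume $k = \alpha$.

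For each forest $F_i$, the idea is to pick an arbitrary root in every connected component (tree) of $F_i$, and orient every edge of $F_i$ toward the root of its component, i.e.\ from the endpoint farther from the root to the endpoint closer to it. Under this orientation each vertex $v$ has at most one outgoing edge inside $F_i$, namely the edge joining $v$ to its parent in the tree of $F_i$ containing $v$, if such an edge exists (if $v$ is a root or an isolated vertex of $F_i$ it has none). Hence the out-degree of $v$ within $F_i$ is at most $1$. Combining the orientations of $F_1, \ldots, F_\alpha$ yields an orientation of all of $E(G)$, since the $F_i$ partition the edge set, and the out-degree of any vertex in the combined orientation is the sum of its out-degrees in the individual forests, hence at most $\alpha$.

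There is essentially no obstacle here; the only points requiring a word of care are interpreting "arboricity at most $\alpha$" as "decomposable into exactly $\alpha$ (possibly empty) forests" and noting that roots and isolated vertices contribute out-degree $0$. Alternatively, one could invoke Hakimi's theorem, which states that $G$ admits an orientation of maximum out-degree at most $d$ iff $|E(H)| \le d\,|V(H)|$ for every subgraph $H$; since arboricity at most $\alpha$ forces $|E(H)| \le \alpha(|V(H)| - 1) \le \alpha\,|V(H)|$, this immediately gives the claim. The direct forest argument above is, however, shorter and fully self-contained, so that is the route I would take.
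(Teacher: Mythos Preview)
Your proposal is correct and follows essentially the same argument as the paper: partition the edges into $\alpha$ forests, root each tree, and orient all edges toward the root so that each vertex has out-degree at most one per forest and hence at most $\alpha$ in total. The paper's justification is a one-sentence version of exactly this.
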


Observation \ref{obs_bounded_arb} follows from the fact that the edges of the graph can be partitioned into $\alpha$ forests, and in each one of them we can orient the edges with out-degree one, by fixing a root in each tree and orienting the edges towards the root.
We fix one of those orientations. We emphasize that this orientation is used only in the analysis, and we do not construct it in the algorithm. For each node $v$, let $N^{\mathrm{in}}_v$ be the set of incoming neighbors and $N^{\mathrm{out}}_v$ be the set of outgoing neighbors of $v$ with respect to this fixed orientation. Note that for all nodes in $N^{+}_S$, the packing value is increased at most $r$
times. Hence, from the choice of $r$, we have that $x_v \leq \frac{1}{(2\alpha + 1)(1+\eps)}$ for $v \in N^{+}_S$. For each $v \in S$, we have (all the values $x_v$ are considered at the end of the algorithm):
\begin{align*}
    X_v = \sum_{u \in N^{\mathrm{in}}_v} x_u + x_v + \sum_{u' \in N^{\mathrm{out}}_v} x_{u'}
    &\geq \frac{1}{1 + \eps}\\
    \Rightarrow \sum_{u \in N^{\mathrm{in}}_v} x_u
    &\geq \frac{1}{1 + \eps} - \frac{\alpha + 1}{(2\alpha + 1)(1+\eps)}\\
    &\geq \frac{\alpha}{(2\alpha+1)(1+\eps)}
\end{align*}

Let $\lambda=\frac{\alpha}{(2\alpha+1)(1+\eps)}$. From the above, we have that $\frac{1}{\lambda} \sum_{u \in N^{\mathrm{in}}_v} x_u \geq 1$.
We can bound the size of $S$ as follows:
\begin{align*}
    |S| = \sum_{v \in S} 1
    \leq \frac{1}{\lambda} \sum_{v \in S} \sum_{u \in N^{\mathrm{in}}_v} x_u
    &\leq \frac{1}{\lambda} \sum_{u \in N^{+}_S} x_u \sum_{v \in S} \mathbb{I}[u \in N^{\mathrm{in}}_v]\\
    &\leq \frac{\alpha}{\lambda} \sum_{u \in N^{+}_S} x_u\\
    &\leq (2\alpha+1)(1+\eps) \opt
\end{align*}

Note that we bound $\sum_{v \in S}\mathbb{I}[u \in N_{v}^{in}]$ with $\alpha$. This is because the out-degree of $u$ is at most $\alpha$, so $u$ can be an incoming neighbor of at most $\alpha$ nodes.

To conclude the proof, we should show that for each node $v \not \in N^{+}_S$, we have $x_v \geq \frac{1}{(2\alpha+1)(1+\eps)}$. This follows from the choice of $r$, as for nodes $v \not \in N^{+}_S$ we increase the packing value $r+1$ times.

%The only remaining part is that each node $v$ has at most $2\alpha$ neighbors in $T = V \setminus (S \cup N(s))$. After the final iteration, all the nodes in $T$ have weight more than $\frac{1}{2\alpha}$ \michal{also, in these lines, check the number $2\alpha$ and explain where it comes from: what is the invariant when the algorithm stops.}. On the other hand, we know $W_v$ is at most $1$ for each node $v$. Hence, $|N(v) \cap T| < 2\alpha$.

Each iteration can be implemented in $O(1)$ rounds in the \CONGEST model. In total, $O(r) = O\left(\frac{\log(\Delta/\alpha)}{\eps}\right)$ rounds.
\end{proof}

%\begin{proof}
%Our algorithm computes a partial dominating set $S$ of size at most $(2 + \eps) \alpha \cdot \opt$ such that any node has at most $2\alpha$ \michal{check numbers, I think currently maybe this should be $2\alpha +2$?} neighbors in the set $T=V \setminus (S \cup N(S))$ of nodes that are not dominated by $S$. This implies $\left | T \right | \leq 2\alpha \cdot \opt$ \michal{here I think there should be +1 to the number that appears in the previous line.}. Hence, $S \cup T$ is a dominating set of size $(4 + \eps)\alpha \cdot \opt$.
%\end{proof}

\section{Algorithm for Weighted MDS}\label{sec:weightedMDS}

We next show a generalized version of our algorithm from \cref{sec:unweighted_MDS}. First, the algorithm works for weighted graphs. Second, we allow for different trade-offs between the two dominating sets we compute in the algorithm. Recall that in \cref{sec:unweighted_MDS} we start by computing a partial dominating set $S$, and then we add to it additional set $S'$ to dominate the rest of the nodes. Previously we just constructed $S'$ as the set of all undominated nodes, but we will see in \cref{sec:randomized} an algorithm that exploits the structure of undominated nodes to get an improved approximation for this part. To get a better approximation for the whole algorithm, we can stop the algorithm for computing a partial dominating set earlier and get an improved approximation for the first part as well. We start by presenting our general scheme, which allows us to get a deterministic algorithm for weighted graphs with the same guarantees obtained in \cref{sec:unweighted_MDS}, and then we show a randomized algorithm that can obtain an improved approximation.

\subsection{Deterministic Algorithm}

We again follow the primal-dual method. We associate a packing value $0 \leq x_v$ to each node such that for any node $u$, the value $X_u = \sum_{v \in N^{+}_u} x_v \leq w_u$. %We say $u$ is $\eps$-tight if $(1-\eps) w_u \leq X_u$. 
For each node $v \in V$, let $\tau_v = \min_{u \in N^{+}_v} w_u$ be the minimum weight of a node that can dominate $v$.

\remove{
\begin{tcolorbox}
[width=\linewidth, sharp corners=all, colback=white!95!black]
\textcolor{blue}{Saeed: What do you think about replacing the unweighted section with some rather brief discussion (and with a bit of  informality and weaker guarantee) here? Something like the one in the following.}
\newline
\newline
As a warm-up, let us start with the unweighted minimum dominating set problem (all nodes have weight one). We start with finding a subset $S$ of nodes along with a feasible packing such that the packing value of each node that is dominated by $S$ is at most $1/(4(\alpha + 1))$ and the packing value of each undominated node is at least $1/(4(\alpha + 1))$. We start with an empty set for $S$ and we set the packing values of each node to $1/(4(\Delta + 1))$. There are $t$ iterations. At the beginning of each iteration, we add all the $1/2$-tight nodes that are not in $S$ to $S$. Then, we multiply the packing values of all undominated nodes by $2$. Observe that the initial packing is feasible and it remains feasible after each iteration. After $t$ iterations, the packing value of each undominated node is exactly $2^t / (4(\Delta + 1))$ and the packing value of each dominated node is at most $2^{t-1} / (4(\Delta + 1))$. So if we set $t$ to $\lceil \log (\Delta + 1)/(\alpha + 1)\rceil$, the initial goal is met.
\newline
\newline
The union of $S$ with the set of nodes that are not dominated by $S$ is clearly a dominating set. We show that indeed its size is at most $4(\alpha + 1) \cdot \opt$. For each undominated node, we know its packing value is at least $1/(4(\alpha + 1))$. So we can charge it on the packing value of itself. To analyze the size of $S$, we consider an orientation of edges of $G$ with a maximum out-degree of at most $\alpha$. Such an orientation should exist as the graph has arboricity $\alpha$. We charge each node $v$ in $S$ on the sum of the packing values of its incoming neighbors. This sum is at least $1/4$. Since $v$ is in $S$, it is $1/2$-tight, i.e. the sum of the packing values of $v$ and its neighbors is at least $1/2$. Since $v$ has at most $\alpha$ outgoing neighbors, the sum on incoming neighbors is at least $1/2 - (\alpha + 1) \cdot 1/(4\alpha + 1) \geq 1/4$. On the other hand, each node can be an incoming neighbor of at most $\alpha$ nodes. So we charge the packing value of each node at most $4\alpha$ times. So in total, each node that is not in $N^{+}_S$ is charged at most $4(\alpha + 1)$ times, and each node inside $N^{+}_S$ is charged at most $4\alpha$ times. This concludes that the size of the dominating set is at most $4(\alpha + 1) \cdot \opt$ as the sum of the packing values is a lower bound for $\opt$. We can save a factor two in the approximation if we multiply the packing values by a factor $(1+\eps)$ rather than $2$ in each iteration. The weighted version of this algorithm with a more careful analysis is given in the following theorem.
\end{tcolorbox}
}

%The standard dual of the fractional weighted dominating set problem is to associate a packing value $0 \leq x_v$ to each node such that for any node $u$, the value $X_u = \sum_{v \in N^{+}_u} x_v \leq w_u$. 
%Observe that for any feasible packing, $\sum_{v \in V} x_v \leq \opt$ where $\opt$ is the weight of the minimum dominating set of $G$. 

\begin{lemma}
\label{lem:partial-ds}
For any $0 < \eps < 1$ and $0 < \lambda < \frac{1}{(\alpha + 1)(1 + \eps)}$, there is a deterministic algorithm that outputs a partial dominating set $S \subseteq V$ along with packing values $\{x_v\}_{v \in V}$ with the following properties:
\begin{enumerate}[label=(\alph*)]
    \item \label{item:partial-ds-prop-a} $w_S \leq \alpha (\frac{1}{1 + \eps} - \lambda(\alpha + 1))^{-1} \sum_{v \in N^{+}_S} x_v$.
    \item \label{item:partial-ds-prop-b} For each undominated node $v \not \in N^{+}_S$, its packing value $x_v$ is at least $\lambda \tau_v$.
\end{enumerate}
The algorithm runs in $O\left(\frac{\log(\Delta \lambda)}{\eps}\right)$ rounds in the \CONGEST model.
\end{lemma}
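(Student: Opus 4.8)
The plan is to generalize the unweighted construction from Lemma~\ref{lem:part_dom} by initializing packing values proportionally to node weights and letting each node grow its packing value geometrically until the number of iterations is large enough to reach the target $\lambda \tau_v$. Concretely, I would initialize $x_v = \frac{w_v}{\Delta+1}$ for every node; since $|N^+_u| \leq \Delta+1$ this makes the initial packing feasible, i.e. $X_u \leq w_u$. Then I would pick the integer $r$ so that multiplying a value of the form $\frac{\tau_v}{\Delta+1}$ by $(1+\eps)$ exactly $r$ times first reaches the threshold $\frac{\lambda \tau_v}{1+\eps}$ — the key point being that this $r$ is \emph{independent of $v$}, because both the initial value and the target scale linearly in $\tau_v$; one checks $r = O(\eps^{-1}\log(\Delta\lambda))$. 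The algorithm then runs for $r+1$ iterations of exactly the three steps from the unweighted case, except the tightness test in step 2 becomes ``if $X_v \geq \frac{w_v}{1+\eps}$ then add $v$ to $S$ and mark $N^+_v$.''

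For feasibility ($X_v \leq w_v$ always), I would repeat the telescoping argument: initially $X_v \leq w_v$; at the start of any later iteration, either $v$ has already been frozen (marked) and $X_v$ is unchanged, or $v$ and all its neighbors were unmarked with $X_v < \frac{w_v}{1+\eps}$ in the previous iteration, and after one $(1+\eps)$-scaling of possibly every term this is still $< w_v$. Property~\ref{item:partial-ds-prop-b} is then immediate from the choice of $r$: a node $v \notin N^+_S$ is never marked, so its packing value gets multiplied by $(1+\eps)$ all $r+1$ times, pushing it past $\lambda\tau_v$ — here I need to be careful that $x_v$ could start smaller than $\frac{w_v}{\Delta+1}$ relative to $\tau_v$, but since $\tau_v \leq w_v$ the bound $x_v \geq \lambda\tau_v$ follows from the calibration of $r$ against the worst case where the relevant scaling target is $\lambda\tau_v$.

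For property~\ref{item:partial-ds-prop-a}, I would fix (only in the analysis) an orientation with out-degree $\leq \alpha$ via Observation~\ref{obs_bounded_arb}, and for each $v \in S$ split $X_v = \sum_{u \in N^{\mathrm{in}}_v} x_u + x_v + \sum_{u' \in N^{\mathrm{out}}_v} x_{u'} \geq \frac{w_v}{1+\eps}$. The outgoing terms and $x_v$ itself are each bounded: every node $u \in N^+_v$ that was marked when $v$ entered $S$ has $x_u \leq \lambda\tau_u \leq \lambda w_v$ — wait, I need $x_u \leq \lambda \tau_v$ style control; more precisely a node $u$ dominated by $S$ had its value multiplied at most $r$ times, so $x_u \leq \frac{\lambda \tau_u}{1+\eps}\cdot(1+\eps) = \lambda\tau_u$, and since $v \in N^+_u$ we get $\tau_u \leq w_v$, hence $x_u \leq \lambda w_v$. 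Summing over the $\leq \alpha+1$ nodes in $\{v\} \cup N^{\mathrm{out}}_v$ gives $\sum_{u \in N^{\mathrm{in}}_v} x_u \geq \big(\frac{1}{1+\eps} - \lambda(\alpha+1)\big) w_v$, which is positive by the hypothesis $\lambda < \frac{1}{(\alpha+1)(1+\eps)}$. Writing $c = \big(\frac{1}{1+\eps} - \lambda(\alpha+1)\big)^{-1}$, we have $w_v \leq c \sum_{u \in N^{\mathrm{in}}_v} x_u$, so $w_S = \sum_{v \in S} w_v \leq c \sum_{v \in S}\sum_{u \in N^{\mathrm{in}}_v} x_u \leq c \sum_{u \in N^+_S} x_u \sum_{v \in S}\mathbb{I}[u \in N^{\mathrm{in}}_v] \leq c\alpha \sum_{u \in N^+_S} x_u$, where the last step uses that $u$ has out-degree $\leq \alpha$ and every $v$ with $u \in N^{\mathrm{in}}_v$ satisfies $u \in N^+_v \subseteq N^+_S$. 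That is exactly property~\ref{item:partial-ds-prop-a}.

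The main obstacle I anticipate is the bookkeeping in property~\ref{item:partial-ds-prop-a} around \emph{which} bound on the frozen packing values $x_u$ for $u \in N^+_v$ is available at the moment $v$ is added to $S$ — one must argue that any such $u$ has been scaled at most $r$ times (equivalently, its value is at most $\lambda \tau_u \leq \lambda w_v$), which requires noting that $u$ became marked no later than $v$ and relating the per-node ``increase count'' to the global iteration index. The definition of $r$ must be chosen to make both directions work simultaneously: large enough that undominated nodes provably reach $\lambda\tau_v$ (for~\ref{item:partial-ds-prop-b}), and structured so that dominated nodes never exceed $\lambda\tau_u$ (needed for~\ref{item:partial-ds-prop-a}). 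Each iteration is clearly $O(1)$ \CONGEST rounds (computing $X_v$ is one round of exchanging current $x_u$ values, which fit in $O(\log n)$ bits since weights are $\poly(n)$), giving the claimed $O(\eps^{-1}\log(\Delta\lambda))$ round complexity.
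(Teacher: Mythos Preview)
Your overall strategy and the orientation-based charging for property~\ref{item:partial-ds-prop-a} match the paper exactly, but the initialization $x_v=\frac{w_v}{\Delta+1}$ is a genuine error that breaks the argument in two places.

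First, it does \emph{not} give a feasible packing. With this choice, $X_u=\sum_{v\in N^+_u}\frac{w_v}{\Delta+1}$, and the bound $|N^+_u|\le\Delta+1$ alone does not force this below $w_u$: take a light node $u$ with many heavy neighbors. The paper initializes instead with $x_v=\frac{\tau_v}{\Delta+1}$. The point is that for every $v\in N^+_u$ one has $u\in N^+_v$ and hence $\tau_v\le w_u$, so $X_u\le\sum_{v\in N^+_u}\frac{w_u}{\Delta+1}\le w_u$. Your own sentence ``both the initial value and the target scale linearly in $\tau_v$'' already hints at the right choice; the value you wrote down does not scale with $\tau_v$.

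Second, your upper bound $x_u\le\lambda\tau_u$ for a dominated node $u$ silently switches to the $\tau_u$ initialization. With $x_u$ starting at $\frac{w_u}{\Delta+1}$, after at most $r$ multiplications you only get $x_u\le(1+\eps)^{r}\frac{w_u}{\Delta+1}$, which cannot be controlled by $\lambda\tau_u$ (or by $\lambda w_v$) when $w_u\gg\tau_u$. Your computation ``$x_u\le\frac{\lambda\tau_u}{1+\eps}\cdot(1+\eps)$'' is really evaluating $(1+\eps)^{r}\frac{\tau_u}{\Delta+1}$. Once you initialize with $\frac{\tau_u}{\Delta+1}$, this bound is correct, and then $\tau_u\le w_v$ (since $v\in N^+_u$) gives $x_u\le\lambda w_v$ and your charging for property~\ref{item:partial-ds-prop-a} goes through exactly as written, and exactly as in the paper.

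One small wording fix in your inductive feasibility step: being \emph{marked} freezes $x_v$, not $X_v$. The correct dichotomy is $v\in S$ (so all of $N^+_v$ is dominated and $X_v$ is frozen) versus $v\notin S$ (so $X_v<\frac{w_v}{1+\eps}$ at the last check, and after one round in which each summand is multiplied by at most $1+\eps$, still $X_v<w_v$). A marked node that is not in $S$ can have undominated neighbors whose values continue to grow.
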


\ThmDetMDS*

%\begin{theorem}
%\label{thm:det-ds}
%For any $0 < \eps$, there is a deterministic $(2\alpha + 1)(1 + \eps)$-approximation algorithm for the minimum weighted dominating set problem in graphs with arboricity at most $\alpha$. The algorithm runs in $O\left(\frac{\log (\Delta/\alpha)}{\eps}\right)$ rounds in the \CONGEST model.
%\end{theorem}

\begin{proof}
We run the algorithm of \cref{lem:partial-ds} with the same $\eps$ and with $\lambda$ equals to $\frac{1}{(2\alpha + 1)(1 + \eps)}$. We want to find a set $S'$ such that $S \cup S'$ is a dominating set. For this, we go over all the undominated nodes $v \not \in N^{+}_S$, and add a node from $N^{+}_v$ with weight $\tau_v$ to $S'$. Clearly, $S \cup S'$ is a dominating set and its weight can be bounded as follows: 

\begin{align*}
    w_{S \cup S'}= w_S + \sum_{v \in V \setminus N^{+}_S} \tau_v
    &\leq \alpha (\frac{1}{1 + \eps} - \lambda(\alpha + 1))^{-1} \sum_{v' \in N^{+}_S} x_{v'} + \sum_{v \in V \setminus N^{+}_S} \frac{x_v}{\lambda}\\
    &\leq (2\alpha + 1)(1 + \eps) \sum_{v \in V} x_v\\
    &\leq (2\alpha + 1)(1 + \eps) \cdot \opt
\end{align*}

In the first inequality, we use property \ref{item:partial-ds-prop-a} of \cref{lem:partial-ds} to bound the first term, and we invoked property \ref{item:partial-ds-prop-b} of this lemma, i.e. $\lambda \tau_v \leq x_v$, to bound the second term.
\end{proof}

\begin{proof}[Proof of \cref{lem:partial-ds}]
For each node $v$, initialize $x_v$ to $\frac{\tau_v}{\Delta + 1}$. This gives us a feasible packing since for each node $u$:
\begin{equation*}
X_u = \sum_{v \in N^{+}_u} x_v = \sum_{v \in N^{+}_u} \frac{\tau_v}{\Delta + 1} \leq \sum_{v \in N^{+}_u} \frac{w_u}{\Delta + 1} \leq w_u
\end{equation*}
If $\lambda < \frac{1}{\Delta + 1}$, we can satisfy the two required properties by setting $S$ to the empty set. Assume $\lambda \geq \frac{1}{\Delta + 1}$ and let $r \geq 1$ be the integer that
\begin{equation*}
    (1 + \eps)^{r-1} \frac{1}{\Delta + 1} \leq \lambda < (1 + \eps)^r \frac{1}{\Delta + 1}
\end{equation*}
We run the following procedure for $r$ iterations.
\begin{enumerate}
    \item For each node $u$, compute $X_u = \sum_{v \in N^{+}_u} x_v$.
    \item If $X_u \geq \frac{w_u}{1 + \eps}$, add $u$ to $S$.%Add all the $(1 - \frac{1}{1 + \eps})$-tight nodes to $S$. 
    \item For each undominated node $v \not \in N^{+}_S$, set $x_v \gets x_v (1 + \eps)$.
\end{enumerate}
\begin{observation}
Through the algorithm, $\{x_v\}_{v\in V}$ is always a feasible packing.
\end{observation}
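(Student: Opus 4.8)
The plan is a straightforward induction over the $r$ iterations, showing the invariant that at initialization and at the end of every iteration one has $x_v \ge 0$ for all $v$ and $X_u = \sum_{v \in N^+_u} x_v \le w_u$ for all $u$. Nonnegativity is immediate, since each $x_v$ is initialized to the positive value $\tau_v/(\Delta+1)$ and is only ever multiplied by the positive factor $1+\eps$. The base case — feasibility of the initial assignment $x_v = \tau_v/(\Delta+1)$ — was already checked just above. So the whole content is the inductive step: assuming $X_u \le w_u$ for every $u$ at the start of an iteration, show the same holds at the end.

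First I would note that Steps 1 and 2 of an iteration modify only $S$, not the packing values, so they cannot break feasibility; only Step 3 needs to be analyzed. Fix a node $u$. I would split into two cases according to whether $u \in S$ after Step 2 of the current iteration. If $u \in S$, then $N^+_u \subseteq N^+_S$, so every node of $N^+_u$ is dominated and none of the values $\{x_v : v \in N^+_u\}$ is updated in Step 3; hence $X_u$ is unchanged and remains at most $w_u$ by the inductive hypothesis. If $u \notin S$ after Step 2, then by the condition in Step 2 we must have $X_u < \frac{w_u}{1+\eps}$ at that moment. Step 3 multiplies a subset of the terms in $X_u = \sum_{v \in N^+_u} x_v$ by $1+\eps$ and leaves the rest fixed, so the new value of $X_u$ is at most $(1+\eps)$ times the old one, which is strictly less than $w_u$. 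In both cases $X_u \le w_u$ at the end of the iteration, completing the induction and the proof.

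There is no real obstacle here; the only point requiring care is that the constraint at a node $u$ could a priori be violated through an increase of $x_v$ for a neighbor $v \in N^+_u$ even when $u$ itself is not being scaled. The resolution is exactly the observation above: if $u \in S$ then $N^+_u \subseteq N^+_S$ and no such increase happens, while if $u \notin S$ then the threshold rule of Step 2 guarantees the slack $X_u < \frac{w_u}{1+\eps}$, which is precisely enough to absorb one round of $(1+\eps)$-scaling of (some of) its terms.
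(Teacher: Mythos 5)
Your proof is correct and follows essentially the same argument the paper gives for the unweighted analogue (the weighted observation itself is left without an explicit proof there): feasibility at initialization, values frozen once $u\in S$, and the slack $X_u < \frac{w_u}{1+\eps}$ absorbing one round of $(1+\eps)$-scaling otherwise. Nothing further is needed.
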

\begin{observation}
\label{obs:partial-ds-packing-bounds}
In the end, the packing value of each undominated node, i.e. nodes in $V \setminus N^{+}_S$, is strictly greater than $\lambda \tau_v$ and the packing value of each dominated node is at most $\lambda \tau_v$.
\end{observation}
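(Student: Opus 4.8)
The plan is to track, for each node $v$, exactly how many times its packing value is multiplied by $(1+\eps)$ over the $r$ iterations, and to relate this count to whether $v$ ends up dominated or not. The central bookkeeping fact is that the packing value evolves deterministically: $x_v$ starts at $\frac{\tau_v}{\Delta+1}$ and is multiplied by $(1+\eps)$ precisely in those iterations where step 3 fires on $v$, so if $v$ is incremented in exactly $k_v$ of the $r$ iterations, then at the end $x_v = \frac{\tau_v}{\Delta+1}(1+\eps)^{k_v}$. To make the count precise I would first record two structural facts about the procedure. First, $N^{+}_S$ is monotonically nondecreasing across iterations, since $S$ only grows (a node added to $S$ is never removed). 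Second, within a single iteration step 2 (adding tight nodes to $S$, hence updating $N^{+}_S$) precedes step 3 (multiplying the currently undominated nodes); consequently $v$'s packing value is multiplied in iteration $i$ if and only if $v \notin N^{+}_S$ after step 2 of iteration $i$.

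With this in hand I would split into the two cases of the statement. If $v$ is undominated at the end, then by monotonicity of $N^{+}_S$ the node $v$ lies outside $N^{+}_S$ after step 2 of every iteration, so $k_v = r$ and $x_v = \frac{\tau_v}{\Delta+1}(1+\eps)^r$. The right defining inequality for $r$, namely $(1+\eps)^r \frac{1}{\Delta+1} > \lambda$, then yields $x_v > \lambda \tau_v$, which is the first claim. If instead $v$ is dominated at the end, let $i \in \{1,\dots,r\}$ be the first iteration at the end of whose step 2 we have $v \in N^{+}_S$. Then $v$ is incremented exactly in iterations $1,\dots,i-1$ and never afterwards, so $k_v = i-1 \leq r-1$ and hence $x_v = \frac{\tau_v}{\Delta+1}(1+\eps)^{i-1} \leq \frac{\tau_v}{\Delta+1}(1+\eps)^{r-1}$. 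The left defining inequality for $r$, namely $(1+\eps)^{r-1}\frac{1}{\Delta+1} \leq \lambda$, then gives $x_v \leq \lambda \tau_v$, which is the second claim. The degenerate branch $\lambda < \frac{1}{\Delta+1}$, where $S = \emptyset$ so every node is undominated with $x_v = \frac{\tau_v}{\Delta+1} > \lambda \tau_v$, is immediate.

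The step requiring the most care is the off-by-one bookkeeping in the dominated case: I must invoke the within-iteration ordering (step 2 before step 3) to conclude that a node first entering $N^{+}_S$ in iteration $i$ receives exactly $i-1$ multiplications rather than $i$, since in iteration $i$ it is already dominated by the time step 3 runs. Everything else reduces to substituting the closed form for $x_v$ into the two defining inequalities for $r$, so once this ordering subtlety is pinned down there is no further difficulty.
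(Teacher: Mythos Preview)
Your proof is correct and follows essentially the same approach as the paper: both arguments count how many of the $r$ iterations multiply $x_v$ by $(1+\eps)$, obtaining $r$ multiplications for undominated nodes and at most $r-1$ for dominated ones, and then apply the two defining inequalities for $r$. Your version is more explicit about the monotonicity of $N^{+}_S$ and the within-iteration ordering that justifies the off-by-one count, which the paper's proof leaves implicit.
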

\begin{proof}
Let $v \in V \setminus N^{+}_S$. Its packing value is multiplied by $1 + \eps$ in all the $r$ iterations. So its final value is $(1 + \eps)^r \frac{\tau_v}{\Delta + 1} > \lambda \tau_v$. If $v \in N^{+}_S$, it is multiplied by $1 + \eps$ at most $r-1$ times. So its final value is at most $(1 + \eps)^{r-1} \frac{\tau_v}{\Delta + 1} \leq \lambda \tau_v$.
\end{proof}

To bound $w_S$, we use \cref{obs_bounded_arb} and orient the edges of $G$ such that the out-degree of each node is at most $\alpha$. The orientation is used only for the analysis.
%To bound $w_S$, we use the fact that $G$ has bounded arboricity. Edges of $G$ can be oriented such that the out-degree of each node is at most $\alpha$. Fix one of those orientations. We emphasize that this orientation is used only in the analysis. 
For each node $v$, let $N^{\mathrm{in}}_v$ be the set of incoming neighbors and $N^{\mathrm{out}}_v$ be the set of outgoing neighbors of $v$ in this fixed orientation.

Consider the packing values at the end of the algorithm. Note that through the algorithm, we freeze the packing value of a node as soon as it gets dominated. With this, we can write the following for each node $u \in S$:

\begin{gather*}
    X_u = \sum_{v \in N^{\mathrm{in}}_u} x_v + x_u + \sum_{v' \in N^{\mathrm{out}}_u} x_{v'}
    \geq \frac{w_u}{1 + \eps}\\
    \Rightarrow \sum_{v \in N^{\mathrm{in}}_u} x_v
    \geq \frac{w_u}{1 + \eps} - \lambda \tau_u - \sum_{v' \in N^{\mathrm{out}}_u} \lambda \tau_{v'}
    \geq w_u (\frac{1}{1 + \eps} - \lambda(\alpha + 1))
\end{gather*}

This implies:
\begin{align}
\label{eq:weight-of-s}
\begin{split}
    w_S = \sum_{u \in S} w_u
    &\leq \sum_{u \in S} (\frac{1}{1 + \eps} - \lambda(\alpha + 1))^{-1} \sum_{v \in N^{\mathrm{in}}_u} x_v\\
    &\leq (\frac{1}{1 + \eps} - \lambda(\alpha + 1))^{-1} \sum_{v \in N^{+}_S} x_v \sum_{u\in S} \mathbb{I}[v \in N^{\mathrm{in}}_u]\\
    &\leq \alpha (\frac{1}{1 + \eps} - \lambda(\alpha + 1))^{-1} \sum_{v \in N^{+}_S} x_v
\end{split}
\end{align}
In the last inequality, we upper bound $\sum_{u \in S}\mathbb{I}[v \in N_{u}^{in}]$ with $\alpha$. This is because out-degree of $v$ is at most $\alpha$, so $v$ can be an incoming neighbor of at most $\alpha$ nodes.

The bound of \cref{eq:weight-of-s} along with \cref{obs:partial-ds-packing-bounds} guarantee property \ref{item:partial-ds-prop-a} and property \ref{item:partial-ds-prop-b}. The only remaining component is the round complexity. Note that each iteration of the procedure runs in $O(1)$ rounds in the \CONGEST model. So in total, there are $O(r) = O(\log_{1 + \eps} \Delta \lambda) = O\left(\frac{\log (\Delta \lambda)}{\eps}\right)$ many rounds.
\end{proof}

\begin{remark}[Unknown $\Delta$]
\label{remark:unknown-delta}
We can transform the algorithm of~\cref{thm:det-ds} into one that works in the setting where $\Delta$ is unknown. Recall that the algorithm has two phases. In the first phase, it computes a partial dominating set $S$ by applying the algorithm of~\cref{lem:partial-ds} with $\lambda$ being $\frac{1}{(2\alpha + 1)(1 + \eps)}$. Then, in the second phase, for each node $v$ that is not dominated by $S$, we add a node with weight $\tau_v$ from $N^+_v$ to the final dominating set. To convert our algorithm, first we initialize the packing value $x_v$ of each node $v$ with $$\frac{\tau_v}{\max_{u \in N^{+}_v} |N^{+}_u|}$$ rather than with $\frac{\tau_v}{\Delta + 1}$. Next, we run the iterations of the algorithm of~\cref{lem:partial-ds} similarly with one extra step at the beginning of each iteration. In this extra step, each undominated node $v$ with a packing value strictly larger than $\lambda \tau_v$ adds a node with weight $\tau_v$ in its neighborhood to the final dominating set. Observe that after $O(\frac{\log \Delta}{\eps})$ rounds, all nodes are dominated and the approximation analysis goes through similarly. Intuitively, since a node cannot decide locally when the first phase terminates, we add this extra step for each iteration to simulate its effect.
\end{remark}

\begin{remark}[Unknown $\alpha$]
\label{remark:unknown-alpha}
When $\alpha$ is unknown, we are not aware of a way to keep the algorithm's round complexity independent of $n$, while preserving the approximation factor. However, we can find a dominating set with approximation factor $(2\alpha + 1)(2 + \eps)$ in $O(\frac{\log n}{\eps})$ rounds (note that here we assume that all nodes know $n$). For that, first we apply the orientation algorithm of Barenboim and Elkin~\cite{barenboim2010sublogarithmic} to find an orientation of edges where the out-degree of each node is at most $(2 + \eps)\alpha$. This algorithm runs in $O(\frac{\log n}{\eps})$ rounds. Next, each node $v$ computes a local approximation of arboricity, denoted by $\hat{\alpha}_v$, for itself which is the maximum out-degree of the nodes in $N^+_v$. To find the partial dominating set $S$, each node initializes its packing value with $\frac{1}{n+1}$. We run the algorithm of~\cref{lem:partial-ds} where each node has its own $\lambda$, denoted by $\lambda_v$, and which equals $\lambda_v = \frac{1}{(2\hat{\alpha} + 1)(1 + \eps)}$. Similar to~\cref{remark:unknown-delta}, we add an extra step in the beginning of each iteration on which any undominated node $v$ with a packing value more than threshold $\lambda_v\tau_v$, adds a node with weight $\tau_v$ in $N^+_v$ to the final dominating set. After $O(\frac{\log n}{\eps})$ iterations, all nodes are dominated and the approximation factor is $(2\alpha + 1)(2 + O(\eps))$.
\end{remark}

\subsection{Randomized Algorithm}\label{sec:randomized}
In the previous section, to extend our partial dominating set $S$ to a dominating set, we simply added one node to $S$ for each undominated node and this introduced a factor $2$ in the approximation factor. Here, we show how we can get $\alpha + O(\log \alpha)$ approximation, but we need $O(\frac{\alpha}{\log \alpha} \log \Delta)$ rounds rather than $O(\log \Delta)$ rounds. The algorithm also becomes randomized.

To reduce the approximation factor, we can leverage property \ref{item:partial-ds-prop-b} of \cref{lem:partial-ds}. To explain the intuition, we focus first on the unweighted case.
If the problem is unweighted (so $\tau_v$ is 1 for all the nodes), then property  \ref{item:partial-ds-prop-b} implies that each node has at most $\lambda^{-1}$ undominated neighbors. The reason is that $X_u = \sum_{v \in N^{+}_u} x_v \leq 1$ for all $u$. Now since any node $v \not \in N^{+}_S$ has $x_v \geq \lambda$, there can only be at most $\lambda^{-1}$ undominated nodes in $N^{+}_u$.
So, dominating the set of nodes that are not in $N^+_{S}$ is a set cover problem with maximum set size $\lambda^{-1}$ which can be approximated with a factor of $O(\log \lambda^{-1})$ in $O(\log \lambda^{-1} \log \Delta)$ rounds according to~\cite{nearsighted} (Note that an undominated node can be a neighbor of many dominated nodes, so the frequency in the set cover instance can be as large as $\Delta + 1$). Recall that the size of $S$ is bounded by $\alpha(\frac{1}{1 + \eps} - \lambda(\alpha + 1))^{-1} \cdot \opt$. On the other hand, for extending $S$, we add $O(\log \lambda^{-1} \cdot \opt)$ many nodes. If we set $\lambda = \Theta(\frac{\log \alpha}{\alpha^2})$ and $\eps = \Theta(\frac{\log \alpha}{\alpha})$, with straightforward calculations we can show that the expected size of the final dominating set is $(\alpha + O(\log \alpha)) \cdot \opt$ and the algorithm takes $O(\frac{\alpha}{\log \alpha} \log \Delta)$ many rounds. The bottleneck for the round complexity is the first phase when we construct $S$ as there the number of rounds depends linearly on $\frac{1}{\eps}$.

The weighted case is more subtle as there is no bound on the set size. Now \cref{lem:partial-ds} implies that $ x_v \geq \lambda \tau_v$, which is a different value for every $v$, hence we cannot argue anymore that each node has at most $\lambda^{-1}$ undominated neighbors.  We are not aware of a result in the literature that we can use as a black box or with a clean reduction for this case, but we still want to exploit property \ref{item:partial-ds-prop-b} of \cref{lem:partial-ds} to get an improved approximation. To do so, we devise a simple iterative randomized algorithm for this case. Towards resolving this, we also improve on the results of~\cite{kuhn2005constant, nearsighted} for solving the dominating set problem on general graphs. There, they presented a $O(k^2)$ rounds randomized algorithm with expected $O(k\Delta^{\frac{2}{k}} \log \Delta)$ approximation factor for the dominating set problem. We shave off the factor $\log \Delta$ in their bound as it is stated in \cref{thm:dsgeneral}.

Our algorithm of extending $S$ in full generality is stated in the following lemma.

\begin{lemma}
\label{lem:small-sets}
For $0 < \lambda$ and $1 < \gamma$, there is a randomized algorithm that given the output of \cref{lem:partial-ds}, it finds $S' \subseteq V$ such that $S \cup S'$ is a dominating set and $\E[w_{S'}] = \gamma(\gamma + 1) \lceil\log_{\gamma}\lambda^{-1} \rceil \cdot \opt$. The algorithm runs in $O(\log_{\gamma}\lambda^{-1}\log_\gamma \Delta)$ rounds in the \CONGEST model. 
\end{lemma}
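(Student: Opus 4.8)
The goal is to cover the set $T = V \setminus N^+_S$ of undominated nodes, where by property~\ref{item:partial-ds-prop-b} each $v \in T$ has $x_v \geq \lambda \tau_v$, while the packing is feasible: $X_u = \sum_{v \in N^+_u} x_v \leq w_u$ for all $u$. The plan is to run a multiphase randomized process on the set-cover instance $(T, \{N^+_u\}_{u \in V})$, where at the start of phase $i$ we restrict attention to the nodes in $T$ whose packing value lies in the band $[\gamma^{i-1}\lambda \tau_v, \gamma^i \lambda \tau_v)$ relative to their own weight — more precisely, bucket undominated nodes by how "expensive" they are to cover, using $x_v / \tau_v$ compared to powers of $\gamma$. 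There are $O(\log_\gamma \lambda^{-1})$ such buckets because $x_v \leq w_u \leq \Delta \tau_v$ forces $x_v/\tau_v \in [\lambda, \Delta]$... wait, one must be careful: the right normalization is to compare $x_v$ against $w$ of the covering node, so I would instead bucket by the ratio $X_u / w_u$ of potential covering nodes, or equivalently track, for each still-undominated $v$, the quantity $\sum_{u : v \in N^+_u} (\text{something})$. The cleanest route: since $x_v \geq \lambda \tau_v$ and for any covering candidate $u \in N^+_v$ we have $x_v \leq X_u \leq w_u$, but also $\tau_v \leq w_u$, the effective "set size" for $v$ is controlled by $1/\lambda$ only after normalizing, so I would run $\lceil \log_\gamma \lambda^{-1}\rceil$ phases.

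**The per-phase step.** Within one phase I want a randomized $\gamma$-round subroutine achieving $O(\gamma)$-approximation for covering the current bucket of nodes whose "density" is within a factor $\gamma$ of uniform. This is exactly the regime where the Kuhn–Moscibroda–Wattenhofer style argument (the improved version giving Theorem~\ref{thm:dsgeneral}) applies: in $O(\gamma)$ rounds one gets expected approximation $\gamma(\gamma+1)$ for a set-cover instance where every element is covered by some set of "small" relative cost. Concretely, each covering node $u$ picks itself into $S'$ with probability roughly $\min(1, c \cdot \gamma \cdot x_u^{\mathrm{dual}}/w_u)$ driven by the dual packing weight it carries, each undominated node that remains uncovered then deterministically adds its cheapest neighbor (weight $\tau_v$), and the $x_v \geq \lambda\tau_v$ bound charges this cleanup against $\sum x_v \leq \opt$ (Lemma~\ref{claim_lb_opt}). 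I would prove: (i) the randomized picks in phase $i$ contribute expected weight $O(\gamma) \cdot \opt$ via the standard argument that $\Pr[u \text{ picked}] \cdot w_u$ sums, against the packing, to $O(\gamma)\sum_v x_v$; (ii) after one phase, every node whose density was in the top band gets covered with constant probability, so the deterministic cleanup costs $O(1)\cdot\opt$ in expectation; (iii) a node not covered after its designated phase survives only with probability that telescopes. Summing over all $\lceil\log_\gamma\lambda^{-1}\rceil$ phases gives $\E[w_{S'}] = \gamma(\gamma+1)\lceil\log_\gamma\lambda^{-1}\rceil \cdot \opt$, and the round count is $O(\gamma)$ per phase times $O(\log_\gamma \lambda^{-1})$ phases... but the stated bound is $O(\log_\gamma\lambda^{-1}\log_\gamma\Delta)$, so actually each phase must itself be $O(\log_\gamma\Delta)$ rounds — the $\log_\gamma\Delta$ is the cost of the inner KMW-type cover because the frequency (how many sets contain an element) can be as large as $\Delta+1$, as the excerpt explicitly warns. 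So each phase is one invocation of the general-graph routine from Theorem~\ref{thm:dsgeneral} with parameter $k = \Theta(\log_\gamma\Delta)$, tuned so its approximation is $O(\gamma^2)$.

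**Main obstacle and how I would handle it.** The hard part is the weighted bucketing: unlike the unweighted case where "each node has $\leq \lambda^{-1}$ undominated neighbors" is immediate, here the bound $x_v \geq \lambda\tau_v$ is relative to a per-node quantity $\tau_v$, and a single covering node $u$ may see undominated neighbors with wildly different $\tau$ values, so there is no uniform set-size bound. The fix is to bucket by the relative density $x_v/\tau_v$: in bucket $B_i = \{v \in T : \gamma^{i-1}\lambda \leq x_v/\tau_v < \gamma^i\lambda\}$ — no, I want the covering-side view — define for each node $u$ and bucket index $i$ the restricted instance, and observe that when we only try to cover $B_i$, the feasibility constraint $\sum_{v \in N^+_u \cap B_i} x_v \leq w_u$ combined with $x_v \approx \gamma^i\lambda\tau_v \geq \gamma^i\lambda \cdot (w_u/\Delta)$... the point is that after normalizing all of $B_i$ to comparable density, the instance behaves like one with bounded relative set size $O(1/(\gamma^{i}\lambda)) \cdot$(geometric factor), and critically only $O(\log_\gamma\lambda^{-1})$ buckets are nonempty since $x_v/\tau_v \in [\lambda, 1]$ (as $x_v \leq X_u \leq w_u$ and we can take $u$ with $w_u = \tau_v$, wait that gives $x_v \leq \tau_v$ only if $v$'s cheapest neighbor dominates $v$ — which it does, $v \in N^+_v$ possibly, or the cheapest neighbor — yes $x_v \leq X_{u^*} \leq w_{u^*} = \tau_v$). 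So $x_v/\tau_v \in [\lambda, 1]$, giving exactly $\lceil\log_\gamma\lambda^{-1}\rceil$ buckets. Then I process buckets from densest ($i$ largest, cheapest to cover) to sparsest, each with one $O(\log_\gamma\Delta)$-round call to the routine behind Theorem~\ref{thm:dsgeneral} at a scale where its guarantee is $\gamma(\gamma+1)\cdot\opt_i \leq \gamma(\gamma+1)\cdot\opt$, and the $\lceil\log_\gamma\lambda^{-1}\rceil$ buckets accumulate additively to the claimed bound. I expect verifying that the general-graph algorithm's guarantee degrades only to $\gamma(\gamma+1)\cdot\opt$ on each restricted bucket — rather than something depending on the bucket's own density — to be the subtle point requiring the packing values $x_v$ to be carried through as the dual certificate in each sub-call.
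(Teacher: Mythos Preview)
Your outer scaffolding matches the paper: $\lceil\log_\gamma\lambda^{-1}\rceil$ phases, each contributing at most $\gamma(\gamma+1)\cdot\opt$ in expectation and taking $O(\log_\gamma\Delta)$ rounds. But the inner mechanics diverge in two places that are genuine gaps. First, you bucket the \emph{elements} by the ratio $x_v/\tau_v$; the paper instead works on the \emph{covering} side. It defines $\Gamma=\{u\notin S\cup S':X_u\ge\gamma^{-1}w_u\}$, where $X_u$ now sums $x_v$ only over still-undominated neighbors, samples every $u\in\Gamma$ with a probability $p$ that starts at $1/(\Delta+1)$ and is multiplied by $\gamma$ for each of $\lceil\log_\gamma(\Delta+1)\rceil+1$ inner iterations (so the last iteration has $p=1$), and between phases scales the packing values by $\gamma$. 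Your element-side bucketing does not make the per-phase instance easier in the weighted setting: even if every $v\in B_i$ has $x_v/\tau_v$ in a narrow band, a covering node $u$ can still have $X_u/w_u$ anywhere in $[0,1]$, because the $\tau_v$'s of its neighbors may be arbitrarily small relative to $w_u$; there is no ``bounded relative set size'' to exploit. Second, you hand each bucket to ``the routine behind Theorem~\ref{thm:dsgeneral}''. In the paper that theorem is a one-line \emph{corollary} of the present lemma (take $S=\emptyset$, $\lambda=1/(\Delta+1)$), so invoking it here is circular; and read on its own terms, Theorem~\ref{thm:dsgeneral} with $k=\Theta(\log_\gamma\Delta)$ gives approximation $\gamma(\gamma+1)(k+1)=\Theta(\gamma^2\log_\gamma\Delta)$, not $\gamma(\gamma+1)$.

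The ingredient you are missing is the \emph{covering-number} bound. For each element $v$, let $c_v$ count the sampled nodes that first cover $v$ in the inner iteration where $v$ becomes dominated. The paper proves $\E[c_v]\le\gamma+1$ by writing $\E[c_v]\le\sum_i\beta_ie^{-\bar\beta_i}$ with $\beta_i=p_id_i$ (where $d_i$ is the number of sets in $\Gamma$ still containing $v$), using $\beta_i\le\gamma\beta_{i-1}$ (since $d_i$ is nonincreasing and $p_i=\gamma p_{i-1}$), and telescoping against $\int_0^\infty e^{-x}\,dx$. This is exactly what pins the expected weight of one phase to $\gamma(\gamma+1)\sum_v x_v\le\gamma(\gamma+1)\cdot\opt$: any sampled $u$ has $w_u\le\gamma\sum_{v\in T_u}x_v$ (where $T_u$ is its set of newly covered elements), and each $x_v$ gets charged $c_v$ times. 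Your sketch asserts that ``$\Pr[u\text{ picked}]\cdot w_u$ sums against the packing to $O(\gamma)\sum_v x_v$'' and that a deterministic cleanup costs $O(1)\cdot\opt$, but neither is established --- a naive cleanup pays $\sum_v\tau_v\le\lambda^{-1}\sum_v x_v\le\lambda^{-1}\opt$, not $O(\opt)$, unless you first control survival probabilities, and that is precisely the role of the covering-number lemma you have not supplied.
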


By optimizing the parameters in \cref{lem:partial-ds} and \cref{lem:small-sets}, we can prove the claim of \cref{thm:rand-mds}. The theorem is restated below.

\ThmRandMDS*

%\begin{theorem}
%\label{thm:rand-mds}
%For any $1 \leq t \leq \frac{\alpha}{\log \alpha}$, there is a randomized algorithm with expected $(\alpha + O(\frac{\alpha}{t}))$-approximation factor for the minimum weighted dominating set problem in graphs with arboricity at most $\alpha$. The algorithm runs in $O(t \log \Delta)$ rounds in the \CONGEST model.
%\end{theorem}

\begin{proof}
We first execute the algorithm of \cref{lem:partial-ds} and then \cref{lem:small-sets} with a suitable set of parameters such that $w_S \leq \alpha(1 + \frac{1}{t})$ and $w_{S'} = O(\frac{\alpha}{t})$. Since $S \cup S'$ is a dominating set, this gives us the desired approximation factor. For the parameters, we set $\eps = \frac{1}{4t}$, $\lambda = \frac{\eps}{\alpha + 1}$, and $\gamma = \max(2, \alpha^{\frac{1}{2t}})$. To bound $w_S$, note that:%\michal{is there some intuitive description for the choice of the parameters?}
\begin{align*}
    w_S
    &\leq \alpha(\frac{1}{1 + \eps} - \lambda(\alpha + 1))^{-1} \cdot \opt\\
    &\leq \alpha(1 - 2\eps)^{-1} \cdot \opt\\
    &\leq \alpha(1 + 4\eps) \cdot \opt\\
    &\leq (\alpha + \frac{\alpha}{t}) \cdot \opt\\
\end{align*}
In the second last inequality, we use the fact that $\eps = \frac{1}{4t} \leq \frac{1}{4}$. To bound $w_{S'}$, first note that $\lambda^{-1} = 4t (\alpha + 1) = O(\alpha^2)$ because $t \leq \frac{\alpha}{\log \alpha}$. If $\gamma$ is $2$, we have:
\begin{equation*}
    \E[w_{S'}]
    = O(\gamma^2 \log_{\gamma}\lambda^{-1} \cdot \opt)
    = O(\log \alpha) \cdot \opt
    = O(\frac{\alpha}{t}) \cdot \opt
\end{equation*}
If $\gamma$ is $\alpha^{\frac{1}{2t}}$, then $t \leq \frac{\log \alpha}{2}$, and we have:
\begin{equation*}
    \E[w_{S'}]
    = O(\gamma^2 \log_{\gamma}\lambda^{-1}) \cdot \opt
    = O(\alpha^{\frac{1}{t}} t) \cdot \opt
    = O(\frac{\alpha}{t}) \cdot \opt
\end{equation*}

The algorithm of \cref{lem:partial-ds} runs in $O(\frac{\log \Delta}{\eps}) = O(t \log \Delta)$ rounds and the algorithm of \cref{lem:small-sets} runs in $$O(\log_{\gamma} \lambda^{-1} \log_\gamma \Delta) = O(\log_{\alpha^{\frac{1}{2t}}} \alpha^2 \log_\gamma \Delta) = O(t \log \Delta)$$ rounds.
\end{proof}

%\begin{remark}
%With a similar calculations as in the proof of \cref{thm:rand-mds}, the best expected approximation factor we get is $\alpha + 6\lceil\log 4\alpha(\alpha + 1)\rceil + 1 = \alpha + 12\log\alpha + O(1)$. The algorithm needs $O(\alpha \log \Delta)$ many rounds.
%\end{remark}

\begin{proof}[Proof of \cref{lem:small-sets}]
We construct $S'$ in several steps. For a moment, assume that the problem is unweighted (all nodes have weight one). From what we have discussed before, each node can dominate at most $q = O(\alpha)$ nodes in $V \setminus N^{+}_S$. In the first step of constructing $S'$, we try to reduce $q$ to $\frac{q}{2}$. We call a node heavy if it has at least $\frac{q}{2}$ undominated neighbors. To get rid of heavy nodes, first, we sample each of them with probability $\frac{1}{\Delta+1}$. We add all the sampled nodes to $S'$ and update the set of undominated nodes to $V \setminus N^{+}_{S \cup S'}$. The set of heavy nodes is updated accordingly. Then, we sample each heavy node with probability $\frac{2}{\Delta + 1}$. We repeat this for $O(\log \Delta)$ iterations until the sampling probability becomes 1. This ensures that in the end, there is no heavy node. To show that there are not too many sampled nodes, let $n'$ be the number of undominated nodes before the first iteration and observe that we need at least $\frac{n'}{q}$ nodes to dominate them. It can be shown that after all the iterations, the expected number of sampled nodes is $O(\frac{n'}{q})$. So in $O(\log \Delta)$ rounds and with an additive loss of $O(1)$ in the approximation factor, we can reduce $q$ to $\frac{q}{2}$. Repeating this for $O(\log q) = O(\log \alpha)$ times, we get a set $S'$ in $O(\log \alpha \log \Delta)$ rounds with expected size $O(\log \alpha \cdot \opt)$ such that $S \cup S'$ is a dominating set. A detailed discussion on the parameterized version of this algorithm that works also for the weighted case is given in the following.

We start $S'$ as an empty set. Unlike the previous parts, for each node $u$, we set $X_u = \sum_{v \in N^{+}_u \cap (V \setminus N^{+}_{S \cup S'})} x_v$ to be the summation of packing values of only undominated nodes in $N^{+}_u$.
Let $\Gamma_1 = \{u \not \in S \cup S': X_u \geq \gamma^{-1} w_u\}$. We sample nodes of $\Gamma_1$ and update it for $r = \lceil \log_\gamma (\Delta + 1) \rceil + 1$ iterations. Before the first iteration, we initialize the sampling probability $p$ with $\frac{1}{\Delta + 1}$. Then, we run the following in each iteration:
\begin{enumerate}
    \item Sample each node in the current $\Gamma_1$ with probability $p$.
    \item Add the sampled nodes to $S'$.
    \item Update $X_u$ for each node $u$. That is, compute $$X_u = \sum_{v \in N^{+}_u \cap (V \setminus N^{+}_{S \cup S'})} x_v$$.
    \item Remove all the nodes from $\Gamma_1$ with $X_u < \gamma^{-1} w_u$.
    \item Set $p \leftarrow \min(\gamma p, 1)$.
\end{enumerate}
For each element $v$, we define a random variable $c_v$. If $v$ is already dominated by $S$ or if it is not dominated by $S'$ after all the iterations, we set $c_v$ to zero. Otherwise, let $i$ be the first iteration where $v$ is dominated. We set $c_v$ to be the number of sampled nodes in iteration $i$ that dominates $v$.

\begin{lemma}
\label{lem:bounded-covering}
For each node $v$, we have $\E\left[c_v\right] \leq \gamma + 1$.
\end{lemma}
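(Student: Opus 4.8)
The plan is to analyze a single fixed element $v$ and the iteration $i$ in which it first becomes dominated by $S'$, bounding the expected number of sampled nodes in that iteration that happen to dominate $v$. Fix $v$ and condition on the entire history of the process up to (but not including) iteration $i$: in particular, condition on the set $\Gamma_1$ at the start of iteration $i$, on the current sampling probability $p$, and on the event that $v$ is undominated at that point. Let $D_v$ denote the set of nodes in $N^+_v \cap \Gamma_1$ at the start of iteration $i$ — these are exactly the nodes whose sampling in iteration $i$ could newly dominate $v$. Each such node is sampled independently with probability $p$, so writing $c_v$ restricted to iteration $i$, we have $c_v = \sum_{u \in D_v} \mathbbm{1}[u \text{ sampled}]$, and $\E[c_v \mid \text{history}] = p \cdot |D_v|$ on the event that $i$ is the first iteration dominating $v$.

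**The key estimate.** The heart of the argument is to show $p \cdot |D_v| \le \gamma + 1$. The natural approach is to use the feasibility of the packing together with the membership condition defining $\Gamma_1$. Every node $u \in D_v \subseteq \Gamma_1$ satisfies $X_u = \sum_{w \in N^+_u \cap (V \setminus N^+_{S \cup S'})} x_w \ge \gamma^{-1} w_u$ at the start of iteration $i$; but also, since the packing is feasible (the relevant observation earlier in the excerpt), $X_u \le \sum_{w \in N^+_u} x_w \le w_u$. The subtle point is relating $|D_v|$ to $x_v$ and to $p$. I expect the argument to go: because $v \in N^+_u$ for every $u \in D_v$ and $v$ is still undominated at the start of iteration $i$, the value $x_v$ appears in each $X_u$; but to get a bound on the \emph{count} $|D_v|$ we instead look from $v$'s side. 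Consider $X_v$ itself at the start of iteration $i$; every $u \in D_v$ lies in $N^+_v$, and more importantly we should track how the packing value $x_v$ compares to the threshold the process maintains. Since $v$ has not yet been removed from consideration, and the invariant from \cref{lem:partial-ds} gives $x_v \ge \lambda \tau_v$, combined with $x_v \le X_v \le w_v$ and the way $p$ has been scaled up (multiplied by $\gamma$ each round starting from $\frac{1}{\Delta+1}$), one derives that $p$ is still small enough that $p|D_v|$ stays bounded. Concretely I would try to show $|D_v| \le \gamma p^{-1} \cdot (\text{something} \le 1)$ by charging: each $u \in D_v$ has $X_u \ge \gamma^{-1}w_u \ge \gamma^{-1} x_u$-type mass concentrated among undominated nodes, and summing $\sum_{u \in D_v} X_u$ and re-expanding by undominated node $w$ gives $\sum_w x_w \cdot |\{u \in D_v : w \in N^+_u\}|$; this double-counting, bounded against the feasibility of the packing at nodes $w$, caps $|D_v|$.

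**Assembling the bound.** Once $p|D_v| \le \gamma$ is established on the conditioning event, we get $\E[c_v \mid \text{history}] \le \gamma$ on that event. There is a $+1$ slack, which I expect comes from the boundary case where $p$ has just been capped at $1$ (the $\min(\gamma p, 1)$ step): in the final iteration $p = 1$ and the clean scaling relation $p = \gamma^{r'}/(\Delta+1)$ breaks, so one loses an additive term, or alternatively the $+1$ absorbs the iteration where $v$ transitions — the sampled node that actually dominates $v$ "for the first time" might need to be counted separately from a multiplicative bound. Averaging over the history (the first-domination iteration $i$ is determined by the process, and the events "$i$ is first" are disjoint over $i$) yields $\E[c_v] \le \gamma + 1$ unconditionally, since on the complementary event ($v$ dominated by $S$, or never dominated by $S'$) we set $c_v = 0$.

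**Main obstacle.** The delicate step is the double-counting bound $p \cdot |D_v| \le \gamma$: it requires simultaneously using (i) the lower bound $X_u \ge \gamma^{-1} w_u$ for nodes in $\Gamma_1$, (ii) feasibility $X_w \le w_w$ of the packing at the undominated nodes, (iii) the current value of $p$ in terms of how many times it has been multiplied by $\gamma$, and (iv) the initialization $x_v = \tau_v/(\Delta+1)$ scaled up by $(1+\eps)$ factors — reconciling the $\gamma$-geometric growth of $p$ with the $(1+\eps)$-geometric growth of $x_v$ is where the bookkeeping is easy to get wrong. I would isolate this as a sub-claim and prove it by a clean charging argument on $\sum_{u \in D_v} X_u$ before touching the probabilistic part.
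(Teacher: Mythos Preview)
Your central claim --- that $p \cdot |D_v| \le \gamma$ at the iteration where $v$ is first dominated --- is false, and no packing-based charging can rescue it. The nodes in $D_v = N^{+}_v \cap \Gamma_1$ are not required to be undominated: membership in $\Gamma_1$ only says $u \notin S \cup S'$ and $X_u \ge \gamma^{-1} w_u$. Hence the lower bound $x_u \ge \lambda \tau_u$ from \cref{lem:partial-ds} does not apply to them, and feasibility of the packing gives no handle on their \emph{count}. Concretely, $|D_v|$ can stay at order $\Delta$ across many iterations while $p_i = \gamma^{i-1}/(\Delta+1)$ grows geometrically, so $p_i\,|D_v|$ can be as large as $\gamma^{i-1}$. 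Your double-counting sketch $\sum_{u \in D_v} X_u = \sum_w x_w \cdot |\{u \in D_v : w \in N^{+}_u\}|$ relates total packing mass to total weight of $D_v$; it cannot produce a bound of the form $|D_v| \le \gamma p^{-1}$, because $p$ is purely a time parameter with no tie to the packing values, which are frozen throughout the $\Gamma_1$ phase.

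The paper's argument does not try to bound $p_i d_i$ at any single iteration. Writing $d_i$ for $|D_v|$ at iteration $i$ and $\beta_i = p_i d_i$, it uses
\[
\E[c_v] \;\le\; \sum_{i=1}^{r} \beta_i \prod_{j<i} (1 - p_j)^{d_j} \;\le\; \sum_{i=1}^{r} \beta_i\, e^{-\bar\beta_i}, \qquad \bar\beta_i := \sum_{j<i}\beta_j,
\]
and then exploits only that $d_i$ is non-increasing and $p_i = \gamma p_{i-1}$, which together give $\beta_i \le \gamma \beta_{i-1}$. This lets each term with $i \ge 2$ be bounded by $\gamma \int_{\bar\beta_{i-1}}^{\bar\beta_i} e^{-x}\,dx$, and the integrals concatenate to at most $\gamma \int_0^\infty e^{-x}\,dx = \gamma$; the leftover first term satisfies $\beta_1 = d_1/(\Delta+1) \le 1$, supplying the $+1$. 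The mechanism is precisely that when $\beta_i$ is large, the survival factor $e^{-\bar\beta_i}$ is already tiny --- it is this balance across iterations, not any per-iteration bound, that yields $\E[c_v] \le \gamma + 1$.
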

\begin{proof}
If $v$ is already dominated by $S$, then $c_v$ is always zero. So suppose this is not the case and assume that it is in $d$ sets of $\Gamma_1$ before the first iteration. Let $d = d_1 \geq d_2 \geq \dots \geq d_r$ be the sequence that maximize the expected value of $\E[c_v]$ where $d_i$ is the number of sets in $\Gamma_1$ that contains $v$ in the beginning of iteration $i$. Let $p_i = \frac{\gamma^{i-1}}{\Delta + 1}$ be the sampling probablity at iteration $i$. We have:
\begin{equation*}
    \E[c_v] \leq \sum_{i=1}^r p_i d_i \prod_{j=1}^{i-1} (1 - p_j)^{d_j} \leq \sum_{i=1}^r p_i d_i \prod_{j=1}^{i-1} e^{-p_j d_j}
\end{equation*}
To simplify the notation, we define $\beta_i = p_i d_i$. Denote the prefix sum of sequence $\beta_i$ as $\bar{\beta}_i = \sum_{j=1}^{i-1} \beta_i$. Let us emphasize that $\bar{\beta}_i$ does not include $\beta_i$. Rewriting the above inequality using $\beta$ and $\bar{\beta}$, we have:
\begin{equation*}
    \E[c_v] \leq \sum_{i=1}^{r} \beta_i \prod_{j=1}^{i-1} e^{-\beta_j} = \sum_{i=1}^{r} \beta_i e^{-\bar{\beta}_i}
\end{equation*}
Since the sequence $d_1, \dots, d_r$ is non-increasing and $p_i = \gamma p_{i-1}$, we have $\beta_i \leq \gamma \beta_{i-1}$ for any $2 \leq i$. Using this, we can deduce the following:
\begin{align*}
\E[c_v] \leq \sum_{i=1}^{r} \beta_i e^{-\bar{\beta}_i}
&= \beta_1 + \sum_{i=2}^{r} \beta_i e^{-\bar{\beta}_i}\\
&\leq \frac{d_1}{\Delta + 1} + \sum_{i=2}^{r} \gamma \beta_{i-1} e^{-\bar{\beta}_i}\\
&\leq 1 + \gamma \sum_{i=2}^{r} \int_{\bar{\beta}_{i-1}}^{\bar{\beta}_{i}} e^{-x} dx\\
\end{align*}
In the last inequality, note that $\bar{\beta}_i - \bar{\beta}_{i-1} = \beta_{i-1}$ by definition. Since all the integral ranges are disjoint, we have:
\begin{equation*}
\E[c_e] \leq 1 + \gamma \int_{0}^{\infty} e^{-x} dx = \gamma + 1
\end{equation*}
\end{proof}

\begin{lemma}
\label{lem:expected-weight}
The expected total weight of sampled nodes in all iterations is at most $\gamma(\gamma + 1) \cdot \opt$.
\end{lemma}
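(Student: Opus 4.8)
The plan is to bound the expected total weight of sampled nodes by charging it against the packing values $\{x_v\}$, and then invoke weak duality (\cref{claim_lb_opt}) to convert that into a bound in terms of $\opt$. Concretely, I would first observe that a node $v$ gets ``credited'' by the sampling exactly when it transitions from undominated to dominated: the quantity $c_v$ from the definition counts how many sampled nodes in that decisive iteration cover $v$. So the total number of (node, sampled-dominator) incidences over the whole execution equals $\sum_{v} c_v$ in expectation, and \cref{lem:bounded-covering} gives $\E[c_v]\le \gamma+1$ for every $v$. The subtlety is that we want a bound on $\sum_{u \text{ sampled}} w_u$, not on a count of incidences, so I need to relate the weight of a sampled node to the packing values of the undominated nodes it covers at the moment it is sampled.

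The key step is the following: whenever a node $u$ is sampled in some iteration, it belonged to $\Gamma_1$ at the start of that iteration, which by definition means $X_u = \sum_{v \in N^+_u \cap (V\setminus N^+_{S\cup S'})} x_v \ge \gamma^{-1} w_u$, i.e. $w_u \le \gamma \sum_{v} x_v$ where the sum is over the undominated neighbors of $u$ at that time. Hence $w_u$ is at most $\gamma$ times the total packing value of the nodes that $u$ is about to help dominate (or at least compete to dominate) in this iteration. Summing over all sampled nodes $u$ across all iterations, each undominated node $v$ contributes its $x_v$ once for each sampled node covering it in the iteration where $v$ first becomes dominated — this is exactly $c_v$ many times — plus possibly for sampled nodes in earlier iterations that did \emph{not} dominate $v$; but a node $v$ that is still undominated at the start of an iteration contributes $x_v$ to every sampled neighbor in that iteration, so I must be a little careful. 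The cleanest route is to charge, for each iteration $i$ and each sampled node $u$ in that iteration, the bound $w_u \le \gamma \sum_{v \in N^+_u,\, v \text{ undominated before iter } i} x_v$, and then switch the order of summation: $\sum_{\text{sampled } u} w_u \le \gamma \sum_v x_v \cdot (\text{number of sampled nodes that covered } v \text{ while } v \text{ was still undominated})$. That inner count is at most $c_v$ — because once $v$ is dominated it is removed from all the $X_u$ sums, so the only iteration in which a sampled node can ``cover an undominated $v$'' is the decisive iteration $i$ for $v$, and in that iteration the count is exactly $c_v$ by definition. This yields
\begin{equation*}
\E\Big[\sum_{\text{sampled } u} w_u\Big] \le \gamma \sum_{v \in V} x_v \cdot \E[c_v] \le \gamma(\gamma+1) \sum_{v \in V} x_v \le \gamma(\gamma+1)\cdot \opt,
\end{equation*}
using linearity of expectation (with the $x_v$ deterministic), \cref{lem:bounded-covering}, and finally \cref{claim_lb_opt}.

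The main obstacle I anticipate is making the double-counting argument airtight: I need to be sure that a sampled node $u$ in iteration $i$ is only charged against packing values of nodes $v$ that are \emph{still undominated at the start of iteration $i$}, and that each such $v$ is charged (over the whole run) against at most $c_v$ sampled nodes. The point that makes this work is that $X_u$ is recomputed each iteration to sum only over currently-undominated nodes, so a node $v$ that became dominated in an earlier iteration no longer appears in any $X_u$ and hence never gets charged again; and in the single iteration where $v$ flips from undominated to dominated, the number of sampled nodes in $u\in\Gamma_1$ whose neighborhood contains $v$ is precisely the definition of $c_v$. One should also handle the corner cases cleanly — nodes $v$ already dominated by $S$ (for which $c_v=0$ and which never contribute), and nodes never dominated by $S'$ (which also never get charged, since they are never the ``reason'' a sampled node's $X_u$ threshold was met... actually they may still contribute to some $X_u$, so one bounds their contribution by $c_v$ as well, which is $0$ only if $u$ never gets sampled while $v$ is its undominated neighbor — here I would instead note that the last iteration has $p=1$, so every node in $\Gamma_1$ is sampled, which forces every node to be dominated by the end, eliminating this case). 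With these observations in place the calculation above goes through.
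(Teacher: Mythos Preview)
Your core argument is correct and matches the paper's proof exactly: bound each sampled $u$ by $w_u \le \gamma \sum_{v \in T_u} x_v$ where $T_u$ is the set of its undominated neighbors at sampling time, swap the order of summation so that each $v$ is charged $c_v$ times, apply \cref{lem:bounded-covering}, and finish with weak duality.

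One small wobble at the end: your worry about a node $v$ that is never dominated by $S'$ yet ``still contributes to some $X_u$'' for a sampled $u$ is vacuous, and your proposed resolution is wrong. If $u$ is sampled in iteration $i$ and $v \in N^+_u$ was undominated at the start of that iteration, then adding $u$ to $S'$ dominates $v$; so a never-dominated $v$ simply never appears in any $T_u$ for a sampled $u$, and its charge is $0 = c_v$ automatically. Your alternative claim that ``$p=1$ in the last iteration forces every node to be dominated'' is false --- the paper explicitly notes that $S\cup S'$ need not be a dominating set after processing $\Gamma_1$ --- but since the case cannot arise, this does not affect the validity of your argument.
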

\begin{proof}
Suppose a node $u$ that is sampled in iteration $i$. Let $T_u$ be the set of nodes in $N^{+}_u$ that are undominated in the beginning of iteration $i$. At that point, $u$ is in $\Gamma_1$ implying $\gamma^{-1} w_u \leq \sum_{v \in T_u} x_v$. So we can upper bound the weight of any sampled node $u$ by $w_u \leq \gamma \sum_{v \in T_u} x_v$. From the definition of $c_v$, each node $v$ appears in $c_v$ many $T_u$s for a sampled node $T_u$. So the total weight of sampled nodes is upper bounded by $\gamma \sum_{v \in V} c_v x_v$. Finally, we know that $\E[c_v] \leq \gamma + 1$ from \cref{lem:bounded-covering} which concludes the proof.
\end{proof}

The set $S \cup S'$ is not necessarily a dominating set. Consider the subproblem of dominating $V \setminus N^{+}_{S \cup S'}$. Clearly, we can dominate these nodes with a set of weight at most $\opt$. Moreover, $\{x_v\}_{v \not \in N^{+}_{S\cup S'}}$ is a feasible packing for this subproblem. Multiply the packing value of each node $v \not \in S \cup S'$ by a factor $\gamma$ and update $X_u$ for each $u \not \in S \cup  S'$. The packing for this subproblem remains feasible. This is because, in the last iteration, we sample all nodes of $\Gamma_1$. So at the end of this iteration, a node $u$ that is not in $S \cup S'$ has $X_u \leq \gamma^{-1}w_u$ and as a result, multiplying the packing values by a factor $\gamma$ is safe.

Now, we define $\Gamma_2$ as $\{u \not \in S \cup S': X_u \geq \gamma^{-1} w_u\}$ and run the procedure with $\Gamma_2$. We repeat this for $t = \lceil\log_{\gamma} \lambda^{-1}\rceil$ times and claim that after that, $S \cup S'$ is a dominating set.

Suppose it is not and let $v$ be an undominated node. At the very beginning, $x_v \geq \lambda \tau_v$ due to property \ref{item:partial-ds-prop-b} of \cref{lem:partial-ds}. Since $v$ is undominated, its packing value is $\gamma^{i} x_v$ after we finish the process of $\Gamma_i$. Since $t = \lceil\log_{\gamma} \lambda^{-1}\rceil$, there should be an $i$ such that the packing value of $v$ is at least $\gamma^{-1} \tau_v$ when we finish working with $\Gamma_{i-1}$. On the other hand and from the definition of $\tau_v$, there is a neighbor $u$ of $v$ with weight $\tau_v$. This means that $u$ is in $\Gamma_i$ and it remains in $\Gamma_i$ until it is sampled. This contradicts that $v$ is not dominated. 

From \cref{lem:expected-weight}, the expected weight of the sampled nodes in one phase is $\gamma(\gamma + 1)\cdot \opt$. So in total, the expected weight of $S'$ is $\gamma(\gamma+1)\lceil \log_{\gamma} \lambda^{-1}\rceil \cdot \opt$.

Each iteration of each phase can be run in $O(1)$ rounds in the \CONGEST model. So in total, the algorithm needs $$O(t \cdot r) = O(\log_\gamma \lambda^{-1} \log_\gamma \Delta)$$ many rounds.
\end{proof}

\ThmRandGeneralMDS*

\begin{proof}
In \cref{lem:small-sets}, we assume $S$ is empty and set $\lambda$ to $\frac{1}{\Delta + 1}$. This does not violate any condition in the lemma. By setting $\gamma$ to $\Delta^{\frac{1}{k}}$, the output $S'$ of the lemma is a dominating set with the claimed size.
\end{proof}

\section{Lower Bound} \label{sec:LB}

\ThmLB*

%\begin{theorem}
%\label{thm:lower-bound}
%Any distributed algorithm that computes any constant approximation of the minimum dominating set on graphs of arboricity 2 requires $\Omega\left(\frac{\log \Delta}{\log\log \Delta}\right)$ rounds in the \LOCAL model.
%\end{theorem}

\begin{proof} 
Kuhn, Moscibroda and Wattenhofer proved that obtaining a constant or poly-logarithmic approximation for the Minimum Vertex Cover (MVC) problem requires $\Omega\left(\frac{\log \Delta}{\log\log \Delta}\right)$ rounds in the \LOCAL model \cite{KMW}. In fact, their lower bound holds even for the \emph{fractional} version of the problem, in which the goal is to assign a value $x_v$ for each node such that for each edge $\{u,v\} \in E$, we have that $x_v + x_u \geq 1$, and $\sum_v x_v$ is minimized.
Let $G$ be the Kuhn-Moscibroda-Wattenhofer (KMW) lower bound graph for approximating the Minimum Fractional Vertex Cover ($\mfvc$), with maximum degree $\Delta$, $n$ nodes, and $m$ edges. We build a new graph $H$ as follows: Take $\Delta^2$ copies of $G$, let us call them $G_1$, $G_2$, $\ldots$, $G_{\Delta^2}$. Add a set $T$ of $n$ additional nodes, one for each node of $G$, and connect each new node to all copies of that original $G$-node. So the degree of each node in $T$ is $\Delta^2$. Next, for each $G_i$, add a node in the middle of each of its edges. This completes the construction of $H$. See \cref{fig:lowerbound} for an illustration. Some observations on $H$ is in the following:
\begin{itemize}[label={-}]
    \item Each $G_i$ has $n + m$ nodes (one copy for each node of $G$ and $m$ middle nodes) and $2m$ edges. Taking $T$ into consideration, $H$ has $\Delta^2(n + m) + n$ nodes and $\Delta^2 (2m + n)$ edges.
    \item The maximum degree of $H$ is $\Delta^2$.
    \item The arboricity of $H$ is 2. For each middle node, orient its two incident edges outward. For each node in $T$, orient its $\Delta^2$ incident edges inward. This gives us an orientation of all the edges of $H$. There is no directed cycle in this orientation and its maximum out-degree is $2$, so the arboricity of $H$ is $2$.
    \item Let $\opt_{\mds}^{H}$ be the size of the minimum dominating set of $H$ and $\opt_{\mvc}^{G}$ ($\opt_{\mfvc}^{G}$) be the size of the minimum (fractional) vertex cover of $G$, then:
    \begin{equation}
    \label{eq:lb}
        \opt_{\mds}^{H} \leq
        \Delta^2 \cdot \opt^{G}_{\mvc} + n = \Delta^2 \cdot \opt^{G}_{\mfvc} + n
    \end{equation}
    The first inequality is because $T$ along with copies of a minimum vertex cover of $G$ in each $G_i$ is a dominating set of $H$. This is because every node in $G_i$ that is not a middle-node has a neighbor in $T$. On the other hand, all the middle-nodes are dominated as we add a vertex cover for each $G_i$. For the equality $\opt^{G}_{\mvc} = \opt^{G}_{\mfvc}$ in \eqref{eq:lb}, we leverage the fact that the KMW graph $G$ is bipartite and as a result the integrality gap of vertex cover on $G$ is 1.\footnote{Alternatively, we can use the fact that $\opt^{G}_{\mvc} \leq 2 \cdot \opt^{G}_{\mfvc}$ in any graph, which would only change the constant in our analysis.}
    In addition, $\opt^{G}_{\mfvc}  \geq \frac{m}{\Delta}$. This holds as if $\{x_v\}_v$ is the optimal fractional solution, we have that $$m \leq \sum_{\{u,v\} \in E} (x_u + x_v) \leq \Delta \sum_v x_v = \Delta \cdot \opt^{G}_{\mfvc}.$$ Since $\opt^{G}_{\mfvc}  \geq \frac{m}{\Delta}$ and for the KMW graph we have $m \geq n$, we can write: $$\opt_{\mds}^{H} \leq (\Delta^2 + \Delta) \cdot \opt^{G}_{\mfvc}.$$
\end{itemize}

\begin{figure}[t]
\centering
\includegraphics[width=0.9\columnwidth]{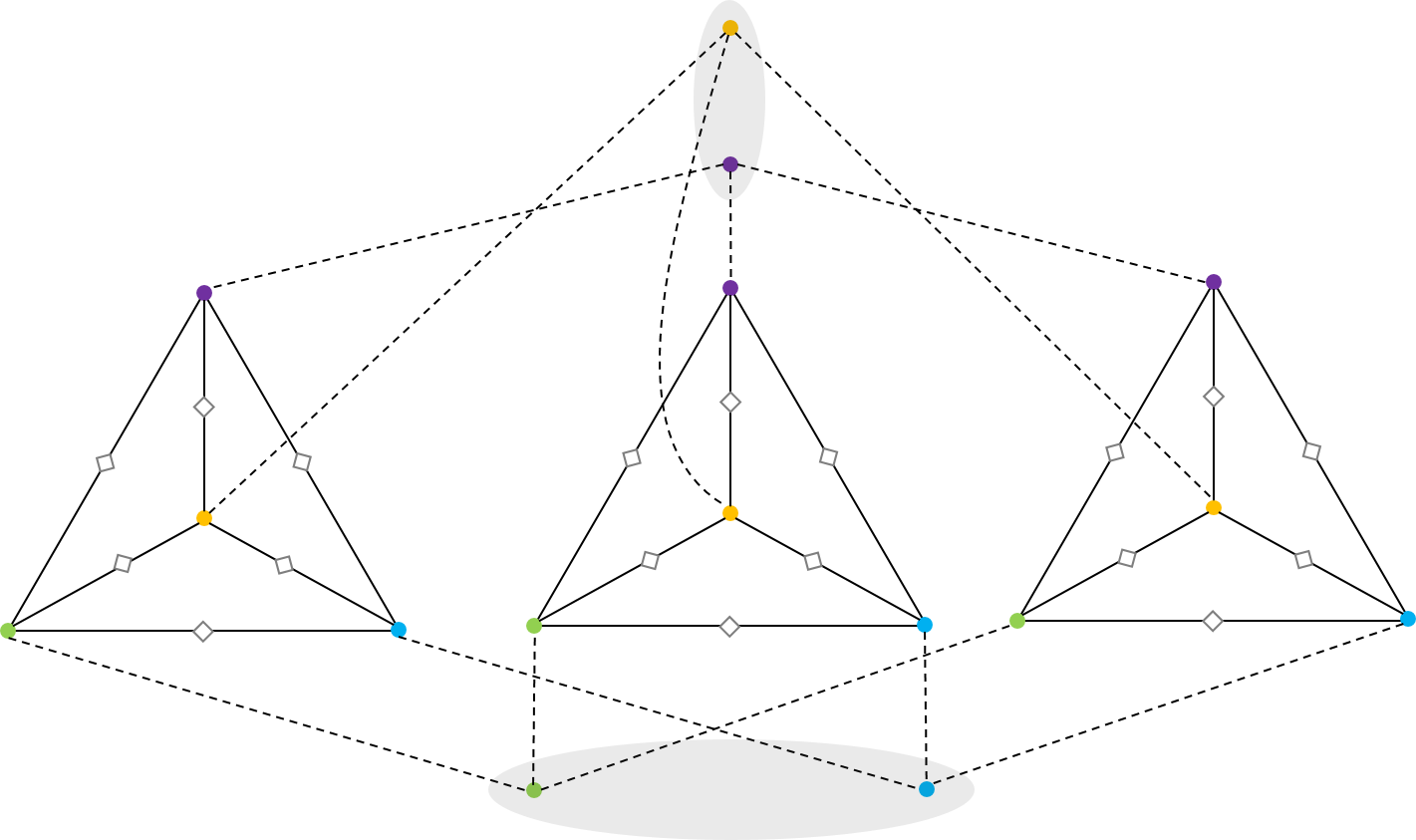}
\caption{The lower bound graph $H$ assumes that the KMW graph $G$ is $K_4$. Only three copies (rather than nine copies) of $G$ are drawn. The middle nodes are indicated by hollow diamonds and $T$ is the set of four nodes in the gray area.}
\label{fig:lowerbound}
\end{figure}

To prove the theorem, suppose there is an algorithm $\mathcal{A}$ with round complexity $o\left(\frac{\log \Delta}{\log\log \Delta}\right)$ that computes a $c$-approximation of Minimum Dominating Set ($\mds$) on $H$ for some constant or poly-logarithmic $c$. Using this algorithm, we will show that we get an algorithm that computes a $c(1 + \frac{1}{\Delta})$ approximation of $\mfvc$ of $G$ in the same number of rounds, hence putting us in contradiction with the lower bound of \cite{KMW}, thus completing the proof.

Notice that the graph $G$ can  simulate the graph $H$ in the \LOCAL model, where each node simulates all its copies in $H$, and each node in $H$ that corresponds to an edge in $G$ is simulated by one of its endpoints. Note that for each edge in $H$, its two endpoints are simulated either by the same node in $G$ or by two neighboring nodes in $G$, hence we can simulate a round of an algorithm in $H$ in one round in $G$. Let us run the algorithm $\mathcal{}$ on the graph $H$, by actually working on the graph $G$. Let $S$ be the computed dominated set of $H$. We want to turn $S$ into a fractional vertex cover of $G$. First, replace each middle node in $S$ with one of its endpoints. This can only decrease the size of $S$ and after it, all the middle nodes are still dominated. Let $S_i$ be the set of nodes of $G_i$ in $S$. Since $S$ dominates all the middle nodes, $S_i$ is a vertex cover of the original graph $G$. As $\mathcal{A}$ is a $c$-approximation algorithm for MDS, we have: $$ \sum_{i \in [\Delta^2]} | S_i | \leq |S| \leq c(\Delta^2 + \Delta) \cdot \opt^{G}_{\mfvc}$$

At the end, each node $v$ of $G$, computes $y_v = \frac{|\{i: v \in S_i\}|}{\Delta^2}$. Observe that $\sum_{v} y_v \leq c(1 + \frac{1}{\Delta}) \cdot \opt^{G}_{\mfvc}$, and we show that $\{y_v\}_v$ is a fractional vertex cover of $G$. This follows from the fact that each $S_i$ is a vertex cover of $G$. Hence, for each edge $\{u,v\} \in G$, if we denote by $\{u_i,v_i\}$ the corresponding edge in $G_i$, we have that at least one of $u_i,v_i$ is in the vertex cover. Let $y^i_v$ indicate if $v_i \in S_i$. We have $$y_u + y_v = \frac{1}{\Delta^2} \sum_{1 \leq i \leq \Delta^2} (y^i_u + y^i_v) \geq 1,$$ as needed. %as each $S_i$ is a vertex cover of $G$ and its size is $\sum_{v} y_v \leq c(1 + \frac{1}{\Delta}) \cdot \opt^{G}_{\mfvc}$. 
So if $\mathcal{A}$ exists, then we can compute a $c(1 + \frac{1}{\Delta})$ approximation of minimum fractional vertex cover of $G$ in $o\left(\frac{\log \Delta}{\log\log \Delta}\right)$ rounds contradicting the lower bound of \cite{KMW}.
\end{proof}

\section*{Acknowledgments}
This work was supported in part by funding from the European Research Council (ERC) under the European Union’s Horizon 2020 research and innovation programme (grant agreement No. 853109), and the Swiss National Foundation (project grant 200021-184735).

\bibliography{refs}

\begin{thebibliography}{AAOdMRS18}

\bibitem[AAOdMRS18]{akhoondian2018distributed}
Saeed Akhoondian~Amiri, Patrice Ossona~de Mendez, Roman Rabinovich, and
  Sebastian Siebertz.
\newblock Distributed domination on graph classes of bounded expansion.
\newblock In {\em Proceedings of the 30th on Symposium on Parallelism in
  Algorithms and Architectures (SPAA)}, pages 143--151, 2018.

\bibitem[Ami21]{DBLP:journals/corr/abs-2102-08076}
Saeed~Akhoondian Amiri.
\newblock Deterministic {CONGEST} algorithm for {MDS} on bounded arboricity
  graphs.
\newblock {\em CoRR}, abs/2102.08076, 2021.

\bibitem[ASS19]{amiri2019distributed}
Saeed~Akhoondian Amiri, Stefan Schmid, and Sebastian Siebertz.
\newblock Distributed dominating set approximations beyond planar graphs.
\newblock {\em ACM Transactions on Algorithms (TALG)}, 15(3):1--18, 2019.

\bibitem[BCGW21]{DBLP:conf/wdag/BonamyCGW21}
Marthe Bonamy, Linda Cook, Carla Groenland, and Alexandra Wesolek.
\newblock A tight local algorithm for the minimum dominating set problem in
  outerplanar graphs.
\newblock In {\em 35th International Symposium on Distributed Computing, {DISC}
  2021}, volume 209 of {\em LIPIcs}, pages 13:1--13:18. Schloss Dagstuhl -
  Leibniz-Zentrum f{\"{u}}r Informatik, 2021.

\bibitem[BE10]{barenboim2010sublogarithmic}
Leonid Barenboim and Michael Elkin.
\newblock Sublogarithmic distributed mis algorithm for sparse graphs using
  nash-williams decomposition.
\newblock {\em Distributed Computing}, 22(5):363--379, 2010.

\bibitem[BU17]{bansal2017tight}
Nikhil Bansal and Seeun~William Umboh.
\newblock Tight approximation bounds for dominating set on graphs of bounded
  arboricity.
\newblock {\em Information Processing Letters}, 122:21--24, 2017.

\bibitem[CCMU21]{DBLP:conf/wdag/CambusCMU21}
M{\'{e}}lanie Cambus, Davin Choo, Havu Miikonen, and Jara Uitto.
\newblock Massively parallel correlation clustering in bounded arboricity
  graphs.
\newblock In {\em 35th International Symposium on Distributed Computing, {DISC}
  2021}, volume 209 of {\em LIPIcs}, pages 15:1--15:18. Schloss Dagstuhl -
  Leibniz-Zentrum f{\"{u}}r Informatik, 2021.

\bibitem[CHW08]{czygrinow2008fast}
Andrzej Czygrinow, Michal Ha{\'n}{\'c}kowiak, and Wojciech Wawrzyniak.
\newblock Fast distributed approximations in planar graphs.
\newblock In {\em International Symposium on Distributed Computing}, pages
  78--92. Springer, 2008.

\bibitem[DKM19]{deurer2019deterministic}
Janosch Deurer, Fabian Kuhn, and Yannic Maus.
\newblock Deterministic distributed dominating set approximation in the congest
  model.
\newblock In {\em Proceedings of the 2019 ACM Symposium on Principles of
  Distributed Computing (PODC)}, pages 94--103, 2019.

\bibitem[DS14]{dinur2014analytical}
Irit Dinur and David Steurer.
\newblock Analytical approach to parallel repetition.
\newblock In {\em Proceedings of the forty-sixth annual ACM symposium on Theory
  of computing}, pages 624--633, 2014.

\bibitem[Dvo19]{dvovrak2019distance}
Zden{\v{e}}k Dvo{\v{r}}{\'a}k.
\newblock On distance-dominating and-independent sets in sparse graphs.
\newblock {\em Journal of Graph Theory}, 91(2):162--173, 2019.

\bibitem[ELR20]{eden2020testing}
Talya Eden, Reut Levi, and Dana Ron.
\newblock Testing bounded arboricity.
\newblock {\em ACM Transactions on Algorithms (TALG)}, 16(2):1--22, 2020.

\bibitem[GBSH21]{gianinazzi2021parallel}
Lukas Gianinazzi, Maciej Besta, Yannick Schaffner, and Torsten Hoefler.
\newblock Parallel algorithms for finding large cliques in sparse graphs.
\newblock In {\em Proceedings of the 33rd ACM Symposium on Parallelism in
  Algorithms and Architectures (SPAA)}, pages 243--253, 2021.

\bibitem[GG06]{goel2006bounded}
Gaurav Goel and Jens Gustedt.
\newblock Bounded arboricity to determine the local structure of sparse graphs.
\newblock In {\em International Workshop on Graph-Theoretic Concepts in
  Computer Science}, pages 159--167. Springer, 2006.

\bibitem[GKM17]{ghaffari2017complexity}
M.~Ghaffari, F.~Kuhn, and Y.~Maus.
\newblock On the complexity of local distributed graph problems.
\newblock In {\em Proc.\ 49th ACM Symp.\ on Theory of Computing (STOC)}, pages
  784--797, 2017.

\bibitem[Joh74]{johnson1974approximation}
David~S Johnson.
\newblock Approximation algorithms for combinatorial problems.
\newblock {\em Journal of computer and system sciences}, 9(3):256--278, 1974.

\bibitem[JRS02]{jia2002efficient}
Lujun Jia, Rajmohan Rajaraman, and Torsten Suel.
\newblock An efficient distributed algorithm for constructing small dominating
  sets.
\newblock {\em Distributed Computing}, 15(4):193--205, 2002.

\bibitem[KMW06]{nearsighted}
F.~Kuhn, T.~Moscibroda, and R.~Wattenhofer.
\newblock The price of being near-sighted.
\newblock In {\em Proc.\ 17th Symp.\ on Discrete Algorithms (SODA)}, pages
  980--989, 2006.

\bibitem[KMW16]{KMW}
F.~Kuhn, T.~Moscibroda, and R.~Wattenhofer.
\newblock Local computation: Lower and upper bounds.
\newblock {\em Journal of the ACM}, 63(2), 2016.

\bibitem[KW05]{kuhn2005constant}
Fabian Kuhn and Roger Wattenhofer.
\newblock Constant-time distributed dominating set approximation.
\newblock {\em Distributed Computing}, 17(4):303--310, 2005.

\bibitem[LOW08]{lenzen2008can}
Christoph Lenzen, Yvonne~Anne Oswald, and Roger Wattenhofer.
\newblock What can be approximated locally? case study: Dominating sets in
  planar graphs.
\newblock In {\em Proceedings of the twentieth annual symposium on Parallelism
  in algorithms and architectures (SPAA)}, pages 46--54, 2008.

\bibitem[LPW13]{lenzen2013distributed}
Christoph Lenzen, Yvonne-Anne Pignolet, and Roger Wattenhofer.
\newblock Distributed minimum dominating set approximations in restricted
  families of graphs.
\newblock {\em Distributed computing}, 26(2):119--137, 2013.

\bibitem[LW10]{lenzen2010minimum}
Christoph Lenzen and Roger Wattenhofer.
\newblock Minimum dominating set approximation in graphs of bounded arboricity.
\newblock In {\em International symposium on distributed computing (DISC)},
  pages 510--524. Springer, 2010.

\bibitem[MSW21]{DBLP:conf/wdag/MorganSW21}
Adir Morgan, Shay Solomon, and Nicole Wein.
\newblock Algorithms for the minimum dominating set problem in bounded
  arboricity graphs: Simpler, faster, and combinatorial.
\newblock In {\em 35th International Symposium on Distributed Computing, {DISC}
  2021}, volume 209 of {\em LIPIcs}, pages 33:1--33:19. Schloss Dagstuhl -
  Leibniz-Zentrum f{\"{u}}r Informatik, 2021.

\bibitem[RG20]{rozhonghaffari20}
V.~Rozho\v{n} and M.~Ghaffari.
\newblock Polylogarithmic-time deterministic network decomposition and
  distributed derandomization.
\newblock In {\em Proc.\ 52nd {ACM} Symp.\ on Theory of Computing (STOC)},
  pages 350--363, 2020.

\bibitem[Sun21]{sun2021improved}
Hao Sun.
\newblock An improved approximation bound for minimum weight dominating set on
  graphs of bounded arboricity.
\newblock In {\em International Workshop on Approximation and Online
  Algorithms}, pages 39--47. Springer, 2021.

\bibitem[Waw14]{wawrzyniak2014strengthened}
Wojciech Wawrzyniak.
\newblock A strengthened analysis of a local algorithm for the minimum
  dominating set problem in planar graphs.
\newblock {\em Information Processing Letters}, 114(3):94--98, 2014.

\end{thebibliography}

\appendix

\section{Missing Proofs} \label{app_miss_proof}

\begin{observation}
There is a single-round algorithm that computes a $3$-approximation of minimum unweighted dominating set on graphs of arboricity $1$.
\end{observation}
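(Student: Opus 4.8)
A graph of arboricity $1$ is a forest, so I only need to handle forests. The one-round algorithm is: in a single round every node learns the degrees of its neighbours, and a node $v$ joins the output set $S$ exactly when it is not a leaf ($\deg(v)\neq 1$), or when it is a leaf whose unique neighbour is also a leaf ($\deg(v)=1$ and that neighbour has degree $1$). Then $S$ is a dominating set: an internal vertex is in $S$; a leaf whose neighbour is internal is dominated by that neighbour (which has degree $\neq 1$, hence lies in $S$); an isolated vertex and both endpoints of every $K_2$ component lie in $S$. It remains to bound $|S|$ by $3\,\opt$, where $\opt$ is the minimum dominating set size of $G$.

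\textbf{Reduction to trees whose components are large.} Partition $V(G)$ according to components: let $Z$ be the isolated vertices, let $W$ be the vertices lying in $K_2$ components (so $|W|=2p$, where $p$ is the number of such components), and let $F_0$ be the union of the components with at least three vertices; write $I$ for the set of internal (degree $\geq 2$) vertices of $G$ and note $I=I(F_0)$, so $|S|=|Z|+2p+|I|$. An optimal dominating set of $G$ must contain every vertex of $Z$, at least one vertex per $K_2$ component, and a dominating set of $F_0$, hence $\opt\geq |Z|+p+\opt(F_0)$, where $\opt(F_0)$ denotes the optimum on the sub-forest $F_0$. Thus it suffices to establish the core inequality $|I|\leq 3\,\opt(F_0)$; combining it with the previous bound gives $|S|=|Z|+2p+|I|\leq 3|Z|+3p+3\,\opt(F_0)\leq 3\,\opt$, as desired.

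\textbf{The core inequality.} This is the classical fact that the non-leaves of a tree form a $3$-approximate dominating set. I would prove $|I(F_0)|\leq 3\,\opt(F_0)$ in one of two ways. (i) By induction on the number of vertices: root each tree at a leaf, locate a deepest leaf, delete it together with its sibling leaves and their common parent, and track how $\opt$, $|I|$, and the number of components change; a mild strengthening of the hypothesis for a single tree $T$ (something like $3\,\opt(T)\geq |I(T)|+2$, to absorb the boundary terms) is what makes the induction close. (ii) More conceptually, by lower-bounding $\opt(F_0)$ through \cref{claim_lb_opt}: it is enough to exhibit a feasible packing whose total value is at least $|I(F_0)|/3$ — for instance a set $\Pi$ of vertices whose closed neighbourhoods $N^+_v$ are pairwise disjoint with $|\Pi|\geq |I(F_0)|/3$ (for such $\Pi$, assigning packing value $1$ on $\Pi$ and $0$ elsewhere is feasible, so $\opt(F_0)\geq |\Pi|$). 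One builds such a $\Pi$ greedily from the leaves upward: repeatedly move a still-unblocked deepest vertex into $\Pi$ and block everything within distance $2$ of it.

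\textbf{The main obstacle.} The subtle point is the core inequality, and it is worth saying why it is not completely trivial: the obvious strategy — charge each internal vertex to a neighbour lying in a fixed optimal dominating set — fails, because a single vertex of the optimum can have arbitrarily many internal neighbours (the family in which a vertex of degree $k$ is the hub of $k$ pendant paths of length three pushes the ratio $|I|/\opt$ arbitrarily close to $3$). The right viewpoint, and the reason I route the argument through the packing lower bound (or through an induction that deletes whole pendant pieces rather than charging single edges), is that a high-degree vertex is precisely what forces the optimum to be large — each of its internal neighbours conceals a further optimum-vertex deep in its own subtree — so the degree that broke the naive charging now works in our favour once the bound is phrased dually.
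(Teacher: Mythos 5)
Your overall architecture is sound, and your handling of the degenerate components is actually more careful than the paper's: the paper simply takes all non-leaf nodes, which fails to dominate $K_2$ components, whereas your rule of also including leaves whose unique neighbour is a leaf repairs this while staying single-round. The reduction to $\opt \geq |Z| + p + \opt(F_0)$ is also fine. The problem is that the core inequality $|I(F_0)| \leq 3\,\opt(F_0)$ --- which is the entire mathematical content of the observation --- is never actually proved. Route (i) is an induction whose strengthened hypothesis you name but do not verify closes; route (ii) asserts that a greedy leaf-up construction yields a $2$-packing $\Pi$ with $|\Pi| \geq |I(F_0)|/3$, but that multiplicity-$3$ bound is exactly as hard as the original statement (you would need to charge every internal vertex to an element of $\Pi$ within bounded distance and show each element is charged at most three times, which is the crux all over again). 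So as written there is a genuine gap: you have correctly isolated the lemma and explained why naive charging fails, but you have not closed it by either route.

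The paper closes it with a five-line argument that charges in the \emph{opposite} direction from the one you ruled out. Root each tree arbitrarily and let $S^*$ be an optimal dominating set. Define $S' = \bigcup_{v \in S^*} \{v, \mathrm{parent}(v), \mathrm{grandparent}(v)\}$ (omitting ancestors that do not exist), so $|S'| \leq 3|S^*|$. Then every internal vertex $v$ lies in $S'$: otherwise $v \notin S^*$ and no child or grandchild of $v$ is in $S^*$; but $v$ has a child $u$, and $u$ must be dominated by some vertex of $N^+_u = \{u, v\} \cup \mathrm{children}(u)$, forcing a child $w$ of $u$ into $S^*$ --- and then $v$ is the grandparent of $w$, a contradiction. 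This sidesteps your obstacle precisely because each optimal vertex is charged a bounded number of times by construction (it pays for at most three vertices of $S'$), rather than being charged once per internal neighbour. I would recommend replacing your two sketches with this argument, applied per tree of $F_0$, and keeping your component decomposition and the $K_2$ fix.
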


\begin{proof}
Take all non-leaf nodes as the dominating set. This is a dominating set, we next show the approximation ratio. Let $S^*$ be an optimal dominating set. A graph of arboricity 1 is a forest, we can fix a root in each one of the trees. We create a dominating set $S'$ as follows. For each node $v \in S^*$, we add to $S'$ the node $v$ along with its parent and grandparent in the tree (if they exist). Clearly $|S'| \leq 3|S^*|$. We show that $S'$ contains all non-leaf nodes, completing our proof. Assume to the contrary that there is a node $v \not \in S'$ that is an internal node in the tree, and let $u$ be a child of $v$ (that exists, as $v$ is non-leaf). It follows that $u \not \in S^*$, as otherwise $v \in S'$. Since $v$ and $u$ are not in $S^*$, and $S^*$ is a dominating set, there is a child $w$ of $u$ such that $w \in S^*$, but then $v \in S'$, a contradiction.
\end{proof}

%\michal{edit proof to make it self-contained}

%\begin{proof}
%Take all non-leaf nodes as the dominating set. This is the minimum connected dominating set of the tree. Moreover, since any minimum connected dominating set is at most three times larger than the minimum dominating set, the approximation factor is at most $3$ (indeed this approximation is essentially tight for this algorithm as can be seen on a long path).
%\end{proof}

\end{document}